\numberwithin{equation}{section}
\theoremstyle{plain}
\newtheorem{theorem}{Theorem}[section]
\newtheorem{proposition}[theorem]{Proposition}
\newtheorem{lemma}[theorem]{Lemma}
\newtheorem{corollary}[theorem]{Corollary}
\newtheorem{remark}[theorem]{Remark} 
\newtheorem{definition}[theorem]{Definition}
\newtheorem{hypothesis}[theorem]{Assumption}
\renewcommand{\P}{\mathbb{P}}
\newcommand{\E}{\mathbb{E}}
\newcommand{\R}{\mathbb{R}}
\newcommand{\F}{\mathcal{F}}
\newcommand{\FF}{\mathbb{F}}
\title{Applications of a New Self-Financing Equation}
\date{\today}
\author{Ren\'e Carmona and Kevin Webster}
\address{ORFE, Bendheim Center for Finance\\
Princeton University}
\begin{document}

\begin{abstract}
The goal of this note is to illustrate the impact of a self-financing condition recently introduced by the authors. We present the analyses of two specific applications usually considered in more traditional models in financial mathematics. They include hedging European options with limit orders and the optimal behavior of market makers.
\end{abstract}

\maketitle
\keywords{MSC 2010: 91G99, 91G80, 91G20}
\keywords{JEL: C6}

\section{Introduction: the self-financing equation}

In quantitative finance, the standard self financing equation is a cornerstone of the theory of frictionless markets. It plays a crucial role in many fundamental results. Mathematically, speaking it is a simple equation which \emph{constrains} the wealth process of an investor to live in a certain sub-space. This sub-space is therefore viewed as a space of \emph{admissible} portfolios. 
New-comers to the mathematical theories of financial markets often gripe with the self-financing condition and how it relates to the real world. While it can be postulated as a mathematical definition, it can also be \emph{derived} from a limiting procedure starting from an accurate description of the microstructure of trades in the trade clock. This approach is at the core of our strategy, and to implement it, we shall have to overcome the idiosyncrasies of the passage from discrete to continuous time.

\begin{quote}
``The sad fact is that the self-financing condition is considerably more subtle in continuous time than it is in discrete time.''\footnote{J. Michael Steele, \emph{Stochastic Calculus and Financial Applications}, section 14.5 'Self-financing and self-doubt'.}
\end{quote} 

When discussing market models at the macroscopic level, we assume that the mid-price $p$ and the inventory $L$ are given by It\^o processes:
\begin{equation}
	\begin{cases}
		dp_t &= \mu_t dt + \sigma_t dW_t \\
		dL_t &= b_t dt + l_t dW'_t
	\end{cases}
\end{equation}
for two Wiener processes  $W$ and $W'$ with unspecified dependence structure. In the simplest case, we also consider an adapted process $s_t$ acting as a proxy for what remains in the continuous time limit of the bid-ask spread measured \emph{in tick size}.
The standard self-financing condition of continuous time finance can be stated as a constraint:
\begin{equation}
\label{fo:csf}
dX_t = L_t dp_t
\end{equation}
between the price $p$ of the underlying interest, the inventory $L$, and the wealth $X$ of the agent.
In most classical financial models, Merton's portfolio theory is a case in point, the price $p$ is exogenously given, the inventory $L$ is the agent's input, and her wealth $X$ appears as the output of equation \eqref{fo:csf}. 

The objective of this paper is twofold. We first formalize the order book in a mathematical fashion and derive the associated transaction costs and trade equations. The second goal is to generalize the self-financing portfolio condition \eqref{fo:csf} to incorporate known pecularities of the high frequency markets including transaction costs, price impact and price recovery. Among other things, we want this generalization to be able to quantify the differences between trading via limit orders and market orders.
Finally, we want to warn the reader that the equations proposed in this paper are only \emph{necessary}, and that quantifying limit order fill rates, priorities and price recovery are beyond the  scope of the present paper.

\subsection{The order book}

We introduce the order book first as a pair of positive measures $(b,a)$ on the price grid. Under the assumption that the mid-price is well-defined, an equivalent definition in terms of an order book \emph{shape function} $\gamma$ is introduced. Formally, $\gamma'' = a + b$ where $\gamma''$ is the second derivative of the function $\gamma$ in the sense of Schwartz distributions.

Transaction costs are shown to be given by the Legendre transform $c$ of the order book shape function $\gamma$. This leads to explicit formulas for all mechanical transactions on the order book, such as the instantaneous price impact, the traded volume, etc. In particular, the discrete time equation for the  wealth associated to a self-financing portfolio will be shown to be:
\begin{equation}
\Delta X = L \Delta p \pm c(\mp \Delta L) +\Delta p \Delta L
\end{equation}
where $\pm$ is $+$ when trading with limit orders and $-$ when trading with market orders.

\subsection{Proposed self-financing equation}
In continuous time, the corresponding self-financing equation takes the form:
\begin{equation}
\label{eq:Intro_Formula}
dX_t = L_t dp_t \pm \int_\R c(y) \phi_{\sigma^2_t}(y) dy dt +d[L,p]_t 
\end{equation}
where as before $\pm$ is $+$ when trading with limit orders and $-$ when trading with market orders, and $\phi_{\sigma^2}$ is the density function of the 
Gaussian distribution with mean $0$ and variance $\sigma^2$.
We show in Section \ref{sec:order_book} below that, when time is measured in the trade clock, the discrete time analog of formula \eqref{eq:Intro_Formula} can be derived rigorously from a specific limit order book feature. It also matches real wealth data (see \cite{cw_self_financing} and the appendix at the end of the paper for empirical evidences).
We shall also impose the constraint 
\begin{equation}
\label{fo:corr_constraint}
d[L,p] < 0
\end{equation}
whenever trading is done with limit orders. The interpretation for this constraint is price impact. It has also been thoroughly tested on high frequency data in \cite{cw_self_financing}. See also the appendix at the end of the paper for evidence from market data from a different exchange.
\vskip 2pt
We now explain how our condition \eqref{eq:Intro_Formula} and the adverse selection constraint \eqref{fo:corr_constraint} relate to the conditions most often found in the literature. Later on in Section \ref{sec:diffusion_limit}, the latter will be derived as continuous time limits of discrete self-financing equations under different scaling assumptions. 

\subsection{The Almgren-Chriss model}
The seminal work of Almgren and Chriss \cite{Almgren} addresses a  closely related question. These authors propose a \emph{macroscopic model} for the price impact and the change of wealth after a liquidity taker's decision. The model leads to a very tractable framework which was, and still is, used in many optimal execution studies (see \cite{Alfonsi, Wang} for example). This framework can be summarized by the system:
\begin{equation}
	\begin{cases}
		dp_t &= f(l_t) dt + \sigma_t dW_t \\
		dL_t &= l_t dt \\
		dX_t &= L_t dp_t -c(l_t)dt
	\end{cases}
\end{equation}
where $f$ and $c$ are positive functions\footnote{$c$ should be understood as a transaction cost function. For this reason, it is often assumed to be convex.}.

The main advantage of this model is that price impact appears in a tractable fashion. Indeed, it comes through the drift $f(l_t)$ of the price process, which creates a \emph{positive correlation} between traded volumes and the price. However, it constrains $L$ to be a \emph{differentiable} function of time, and as a result, the model parameters cannot be calibrated to market data directly, making the model difficult to test empirically. 
As per the empirical analysis of NASDAQ data reported in \cite{cw_self_financing}, there is ample evidence supporting nondifferentiable inventories (see also the appendix at the end of the paper). Moreover, certain trading strategies, such as delta-hedging, latency arbitrage, and statistical arbitrage, naturally lead to inventory models with infinite variation. Finally, the use of limit orders is not covered by the Almgren-Chriss approach.

\subsection{Transaction cost literature}
The branch of classical mathematical finance most related to our paper is portfolio selection under transaction costs (\cite{ Chellathurai, Magill, Shreve} or the recent review \cite{MuhleKarbe}). Most of these works start from  an equation for the wealth of a liquidity taker which generalizes the classical self-financing equation to a setting with transaction costs. In general however, these papers do not underscore the derivation of the model, but instead, they emphasize the study of its consequences. We hope to appeal to this part of the community by providing more accurate equations for self-financing portfolios while keeping reasonable tractability, leading the way to problems related to \emph{liquidity provision}, such as market making. An interesting feature of such applications is that the agent does \emph{not} directly control her portfolio, adding an additional modeling challenge. For the record we note that the standard equation used in this branch of the literature is:
\begin{equation}
\label{eq:Intro_standard}
dX_t = L_t dp_t - \frac{s_t}{2} |dL|_t 
\end{equation}
where $s_t$ is the bid-ask spread, and again, the inventory process $L$ is assumed to have finite variation, i.e. $\int_0^t |dL|_s <\infty$ for all finite $t$.

Strengths of this model are its simplicity, relative tractability, and straightforward calibration to the market.
However, we see a shortcoming in the fact that the process $L$ can only have finite variation\footnote{See \cite{cw_self_financing} and the appendix for why this is problematic.}. Moreover, price impact, limit orders and other microstructure considerations are absent from the model.

\subsection{Methodology}
Rather than postulating the definition of a self-financing portfolio directly in the continuous limit, our objective is to derive the self-financing equation from a more fundamental perspective. To obtain our results, we therefore propose the following novel strategy:
\begin{enumerate}
\item Define a discrete time representation of the market where all the relevant primary quantities (e.g. price) are rigorously defined on a trade-by-trade basis. This is what we call the 'microscopic' scale.
\item Deduce or define the relevant equations for the derived quantities (e.g. wealth) and trading constraints (e.g. trading via limit orders) on the same scale.
\item Assume that the primary quantities of interest are samples from continuous time diffusion processes, with a sampling frequency that goes to infinity. As a result, each microscopic model is embedded in a sequence of microscopic models that approximate a continuous time model. We call the continuous time limit the 'macroscopic'  limit of the model. 
\item Using appropriate limit theorems, derive the continuous time analogs of the derived quantities and trading constraints.
\end{enumerate}
The main quantity of interest to us, the wealth of a self-financing portfolio, does not require a stochastic model but rather a precise description of the rules underpinning trades on a limit order book. These rules create structural relationships that we exploit to express, at the microscopic scale, wealth as a non-linear functional on the path of our primary quantities. A functional law of large numbers then controls the limiting argument used to derive our continuous time equivalent of the self-financing portfolio equation.

The four-step approach described above can be used in a variety of financial markets. Indeed, it should be possible to derive a self-financing equation for each form of market microstructure. Controlling a continuous time limit for the input variables could then lead to a tractable summarizing equation. While it can hold in other markets, the particularly simple form it takes, and the ease with which it can be tested on empirical data, make high frequency markets an ideal test bed for our theory. Our presentation was influenced by the high frequency markets for three reasons: 1) they are the most relevant among the electronic markets; 2) we can illustrate and test many specific features on empirical data; 3) we believe that our price impact constraint is a particularly strong feature of these markets. 

Our research also casts new light on the difference between trading via limit orders and market orders by modeling price impact. This can be done directly in the continuous time limit, or via the same strategy as for the self-financing equation. In that case, a model for the price move after each trade is proposed on the microscopic scale, and a functional central limit theorem is used to derive the corresponding continuous time equations.

We believe in the potential of our modeling approach, and we hope that other idiosyncrasies of the financial markets will be incorporated into continuous time models through this micro-to-macro approach. Notable papers with a similar microscopic approach are \cite{Stoikov, Cartea, Stoikov2} and \cite{Cont2, Cont}. The former three  papers study microstructure to derive optimal bid-ask spread policies. The latter  two  propose microscopic models of the limit order book. Finally, \cite{Cont3} derives the diffusion limit of one of those microscopic order book models.

\subsection{Basic Results of \cite{CarmonaWebster_FS}}
The first part of the paper formalizes trading on a limit order book, and provides a microscopic description of trades. A duality relationship with transaction costs is introduced and the self-financing condition is derived as a plain accounting relationship for individual trades.

\begin{itemize}
\item We start with a model of the limit order book given by two positive (finite) measures with non-overlapping supports. They represent the distributions of the limit buy and sell orders.
\item From there, we derive the optimal behavior of of a risk neutral liquidity taker: typically he should place an order at the expected value of the future values of the price.
\item We then argue that in discrete time given by the trade clock, the changes in the three fundamental quantities should be linked by the fundamental accounting relationship 
\begin{equation}
\label{fo:discrete_sf}
\Delta X = L \Delta p  + c(-\Delta L) + \Delta p \Delta L 
\end{equation}
which we understand as a single trade \emph{self financing equation}. Here $p$ represents a quoted price, typically the mid-price averaging the best bid and the best ask, $L$ is the inventory (number of shares) and $K$ the amount of cash held by the trader,  and$X$ is the wealth (as marked to the quoted price $P$) defined by $X=pL+K$. As usual we use the notation $\Delta$ for the change of a quantity after a trade takes place. 
\end{itemize}

Then, continuous time versions of the self-financing equation are derived as limits of the discrete self-financing equation under different scaling regimes. We review briefly the various steps of these limiting procedures and we identify one of these limiting models as being more relevant than those currently used in the literature. The back-and-forth approach starts from a continuous time model in which both the quoted price $p_t$ and the inventory $L_t$ are It\^o processes. See \eqref{fo:price_inventory} below. Since the assumption that the inventory process has a non-trivial quadratic variation component is not widely used in the existing financial mathematics literature, \cite{CarmonaWebster_FS} provides empirical evidence in two appendices based on the analyses of the NASDAQ and the TSX high frequency trading data.

Continuous time versions of the single trade self-financing conditions are obtained in the following way. We start from a filtered probability space $\left(\Omega, \F, \FF, \P\right)$ supporting two $\FF$-Wiener processes $W$ and $W'$ with unspecified correlation structure and over a fixed time interval, say $[0,1]$, we give ourselves two $\FF$-adapted processes for the price and inventory of a liquidity provider:
\begin{equation}
\label{fo:price_inventory}
	\begin{cases}
		p_t &= p_0 + \int_0^t \mu_u du + \int_0^t \sigma_u dW_u \\
		L_t &= L_0 + \int_0^t b_u du + \int_0^t l_u dW'_u
	\end{cases}
\end{equation}
where $p_0$ and $L_0$ are square integrable $\F_0$-measurable random variables, and $\mu$, $\sigma$, $b$ and $l$ are $\FF$-adapted and c\`adl\`ag processes. We assume that the order book at time $t$ is given by a convex shape function $\gamma_t$ which is  a continuous $\FF$-adapted process and satisfies $\gamma_t(0)=0$, and we denote by $c_t$ its Legendre transform giving the transaction costs associated to the transactions on the order book at time $t$. The major insight of  \cite{CarmonaWebster_FS} was to prove that if at a scale given by an integer $N\ge 1$ we assume that the mid-price evolves according to the discretization $p^N_n =p_{n/N}$ for $n=0,1,\cdots,N$ of the continuous time $p_t$ (and similarly for the inventory $L^N_n=L_{n/N}$, then if the shape of the order book is give at the scale $N$ by:
\begin{equation}
\label{fo:gamma_normalization}
\gamma^N_n(x) = \frac{1}{N} \gamma_{n/N}\left(\sqrt{N} x\right).
\end{equation}
then if the single trade self-financing condition holds at scale $N$, then it is proven in \cite{CarmonaWebster_FS}  that the continuous time relationship between the liquidity provider wealth $X$, inventory $L$, the price $p$ and the transaction cost function $c$ is:
\begin{equation}
		dX_t = L_t dp_t + \Phi_{l_t}(c_t) dt + d[L,p]_t
\end{equation}
where $X_t = \lim_{N\rightarrow\infty} X^N_{\lfloor Nt\rfloor}$ u.c.p. In this formula, $\Phi_{\sigma^2}$ is the cumulative distribution function of the 
Gaussian distribution with mean $0$ and variance $\sigma^2$.

\vskip 6pt
The goal of this note is to present a couple of applications already touted in the literature, but which get a nw lease on life in light of the new self-financing condition. 

\begin{remark}
Before we proceed, we note that it is possible to recover the classical Almgren-Chriss price impact model, as well as the standard proportional transaction cost model, by choosing different renormalizations of the order book shape function different from \eqref{fo:gamma_normalization}. We refer the interested reader to Section 3 of \cite{CarmonaWebster_FS} for a detailed discussion of this important remark.
\end{remark}

\section{Price Impact and Models}
\label{sec:price_impact}
The above self-financing equations can be considered as  \emph{bare bones} descriptions of the market. They provide for an accountant's perspective on the market. Given a trader's inventory and the limit order book he or she trades on, the accountant can track his or her wealth perfectly. The number of hypotheses made are minimal in order to obtain the result which  tracks perfectly wealth once a trading strategy is given.

However, the above framework does not shed much light on which trading strategies can lead to the inventory processes 
satisfying the self-financing conditions we derived. Clearly, if any strategy is permissible in some reasonable sense, trading with limit orders is always preferable to trading with market orders as it is obviously preferable to capture the transaction costs rather than pay them. But in practice there is a trade-off between using limit and market orders. This trade-off has been captured by key words such as \emph{adverse selection}, \emph{price impact} and \emph{market response function}. These three terms are related, at least on an informal level, and correspond to attempts by different communities (the economics, mathematical finance and econophysics communities respectively) to model the hidden cost of placing limit orders. We propose our own approach within the framework of the previous two sections. For the sake of definiteness, we use the terminology \emph{price impact}, in line with the rest of the mathematical finance community.

We would like to stress that we regard Definition \ref{adv_selection} as the main thrust of this section, and our main contribution to the literature on price impact. It is motivated by the empirical fact that, with very high probability, $\Delta L \Delta p \ge 0 $ when trading with market orders and $\Delta L \Delta p \le 0$ when trading with limit orders. In plain words, these inequalities state that the price never moves immediately against a market order. It is a very simple, yet robust model of price impact.

\begin{definition}\label{adv_selection}
Let $p$ be a price process, $c$ a transaction cost process and $L$ an inventory process obtained exclusively via the use of limit orders. We say that the triplet $(p,c,L)$ is consistent if the sample paths of $\left[p,L\right]_t$ are strictly decreasing a.s. .
\end{definition}
We shall sometimes say that the couple $(p,L)$ is consistent when the transaction cost process $c$ 
corresponding to the order book on which the limit orders are placed is understood from the context.

\vskip 4pt
Next, we illustrate the significance of this definition on a specific model introduced to justify efforts such as \cite{Schied} and \cite{Wang} to model price impact as a deterministic mean-reversion of the limit order book between two trade times. The advantage of choosing such a model for the purpose of illustration is that we deal with bona fide  \emph{equations} rather than mere inequalities, leading to stronger results, albeit under stronger assumptions.

\subsection{More explicit and rigid price impact model}
The aim of this section is to show that a model for limit order fill rates with exact price recovery can lead to models where the price is a function of trade volumes, and vice-versa. This provides a model of supply and demand in high frequency markets, and closes the loop of our excursion in modeling. While this structural model is more rigid than our previous reduced form models, we believe that it illustrates some important market features, and provides yet another example of our micro-to-macro transition. By exact price recovery, we mean that in this model, one can compute the exact price move taking place when a limit order depletes liquidity on the limit order book, and otherwise, assume that the price recovers to a deterministic value between the previous price and this new price resulting from the move.

\subsubsection{Microscopic assumptions}
Let $(\Omega, \F, \P)$ be a probability space and $p$ and $L$ be two discrete time processes representing the market price and the inventory of a liquidity provider respectively. Let $\gamma$ be a $C^3$-function valued discrete time process representing our provider's shape function and $c$ its associated transaction cost process.
Our fundamental assumption is that
\begin{equation}
\Delta p = \lambda c'(-\Delta L)  \label{eq:supply_demand_p}
\end{equation}
or, equivalently
\begin{equation}
\Delta L = -\gamma'(\lambda^{-1} \Delta p) \label{eq:supply_demand_L}
\end{equation}
where $\lambda \in(0,1]$ is a real number that encapsulates price recovery. The bigger $\lambda$, the smaller the price recovery.

\subsubsection{Tools}
Equation \eqref{eq:supply_demand_p} allows a liquidity provider to derive the price from trade volumes and the order book, while equation \eqref{eq:supply_demand_L} derives the trade volumes from the prices and the order book. Both lead to the same consistency relationships between $p$, $L$ and $\gamma$ in the continuous limit.

Our analysis is based on a result from \cite{Jacod} which we state for the sake of completeness.

Let $(\Omega, \F, \FF, \P)$ be a filtered probability space supporting an $1$-dimensional $\FF$-Wiener process $W$, and $Y$ a $1$-dimensional  It\^o process of the form:
\begin{equation}
Y_t = Y_0 + \int_0^t b_t dt + \int_0^t \sigma_t dW_t, \qquad\qquad t\in[0,1].
\end{equation}

\begin{hypothesis}{((H)+ (K) from \cite{Jacod})}
\label{J1}
We assume that $b_t$ and $\sigma_t$ are progressively measurable, $b_t$ is locally bounded and $\sigma_t$ is c\`adl\`ag.
\end{hypothesis}

Now let $F : \Omega \times [0,1] \times \R \rightarrow \R$ be a random, $\FF$-adapted function that is $C^1$ in $y$ and $C^0$ in (t,y). We will shorten the notation to $y\mapsto F_t(y)$. 

\begin{hypothesis}{((7.2.1), (10.3.2), (10.3.3), (10.3.4) and (10.3.7) from \cite{Jacod})}\label{J3}
We assume that a.s. for all $t$, $F_t$ is an \emph{odd} function.
Furthermore, we assume that there exists a function $g: \R \rightarrow \R$ with at most polynomial growth, and a real number $\beta > 1/2$ such that, for all $\omega \in \Omega$, $(t,s)\in[0,1]^2$ and $y \in \R$ we have:
\begin{align*}
|F_t(y)| &\le g(y) \\
|F'_t(y)| &\le g(y) \\
|F_t(y) - F_s(y)| &\le g(y)|t-s|^\beta 
\end{align*}
\end{hypothesis}
We will make use of the following result (10.3.2) from \cite{Jacod}: 
\begin{theorem}\label{thm_J2}
Under assumptions \ref{J1} and \ref{J3}, there exists a very good filtered extension of the original space such that we have the following stable convergence in law as $N\rightarrow \infty$:
\begin{align*}
\frac{1}{\sqrt{N}}\sum_{n=1}^{\lfloor Nt \rfloor}& F_{n/N}\left({\sqrt{N}(Y_{(n+1)/N} - Y_{n/N})}\right) \rightarrow U_t
\end{align*}
where 
\begin{equation}
U_t = \int_0^t  b_s \Phi_{\sigma_s}\left(F'_s\right) ds  + \int_0^t \sqrt{\Phi_{\sigma_s}\left((F_s)^2\right)}	 dW'_s
\end{equation}
where $W'_t$ is a  $d$-dimensional Wiener process such that
\begin{equation*}
[W',W]_t =  \int_0^t \frac{\Phi_{\sigma_s}\left( id \, F^k_s\right)}{\sigma_s \sqrt{\Phi_{\sigma_s}(F^k_s)^2}} ds
\end{equation*}
and $id$ is the identity function.
\end{theorem}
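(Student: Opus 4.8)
Since the statement is a specialization of Theorem 10.3.2 in \cite{Jacod}, the plan is only to describe the structure of the argument. Writing $\Delta^N_n Y = Y_{(n+1)/N}-Y_{n/N}$, $\zeta^N_n = N^{-1/2} F_{n/N}\bigl(\sqrt N\,\Delta^N_n Y\bigr)$ and $S^N_t = \sum_{n=1}^{\lfloor Nt\rfloor}\zeta^N_n$, I would split each $\zeta^N_n$ into its $\F_{n/N}$-conditional expectation and a martingale increment. The first task is to show that the summed conditional expectations converge to the finite-variation part of $U_t$, while the remaining martingale array converges, in the stable sense, to the stochastic-integral part of $U_t$; adding the two limits then yields $S^N\to U$ in the required mode.

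\emph{Reduction to Gaussian increments and the drift term.} Because $b$ is locally bounded, the drift contributes only $O(1/N)$ to $\Delta^N_n Y$, so conditionally on $\F_{n/N}$ one may replace $\sqrt N\,\Delta^N_n Y$ by a $\mathcal N\bigl(b_{n/N}/\sqrt N,\ \sigma^2_{n/N}\bigr)$ variable; the polynomial bounds on $F$ and $F'$ from Assumption \ref{J3}, the c\`adl\`ag property of $\sigma$, and elementary moment estimates for Gaussian increments make the resulting errors negligible after summation, while the H\"older-in-time bound $|F_t(y)-F_s(y)|\le g(y)|t-s|^\beta$ with $\beta>1/2$ permits freezing the function at $F_{n/N}$ over the whole interval $[n/N,(n+1)/N]$. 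Expanding in the small Gaussian mean and using that $F_{n/N}$ is odd, so that the leading Gaussian moment vanishes, gives $\E[\zeta^N_n\mid\F_{n/N}] = N^{-1} b_{n/N}\,\Phi_{\sigma_{n/N}}(F'_{n/N}) + o(1/N)$ with $\Phi_{\sigma}(h):=\int_\R h(y)\phi_{\sigma^2}(y)\,dy$; this is a Riemann sum, and local boundedness of $b$ together with c\`adl\`ag-ness of $\sigma$ and continuity of $t\mapsto F_t$ yield $\sum_n\E[\zeta^N_n\mid\F_{n/N}]\to\int_0^t b_s\,\Phi_{\sigma_s}(F'_s)\,ds$ uniformly on compacts in probability.

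\emph{The martingale term and stable convergence.} Set $M^N_t = S^N_t - \sum_n\E[\zeta^N_n\mid\F_{n/N}]$. Oddness of $F$ again makes the predictable bracket of $M^N$ equal, to leading order, to $N^{-1}\sum_n\Phi_{\sigma_{n/N}}(F_{n/N}^2)\to\int_0^t\Phi_{\sigma_s}(F_s^2)\,ds$, while, by the Gaussian integration-by-parts identity $\E[F(\sigma Z)Z]=\sigma^{-1}\Phi_\sigma(\mathrm{id}\cdot F)$, the predictable covariation of $M^N$ with $W$ is asymptotically $N^{-1}\sum_n\sigma_{n/N}^{-1}\Phi_{\sigma_{n/N}}(\mathrm{id}\cdot F_{n/N})\to\int_0^t\sigma_s^{-1}\Phi_{\sigma_s}(\mathrm{id}\cdot F_s)\,ds$; the Lindeberg condition follows from the polynomial growth of $F$ and the moment bounds above. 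After also checking that the summed predictable covariation of $M^N$ with every bounded martingale orthogonal to $W$ tends to zero, I would invoke the stable limit theorem for triangular martingale arrays of Jacod and Shiryaev, in the form used in \cite{Jacod}, to conclude that $M^N$ converges stably in law, on a very good filtered extension, to $\int_0^\cdot\sqrt{\Phi_{\sigma_s}(F_s^2)}\,dW'_s$ for a Wiener process $W'$ with $[W',W]_t = \int_0^t\Phi_{\sigma_s}(\mathrm{id}\cdot F_s)\bigl(\sigma_s\sqrt{\Phi_{\sigma_s}(F_s^2)}\bigr)^{-1}\,ds$.

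\emph{Main obstacle.} The Taylor expansions with error control are routine once the Gaussian moment bounds and the time-regularity of $F$ are in place; the genuinely delicate point is upgrading convergence in law to \emph{stable} convergence and correctly identifying the extended probability space carrying $W'$ together with its joint law with $W$ --- this is exactly where the covariation conditions against $W$ and against its orthogonal complement are needed, and where the structural hypotheses on $\sigma$, used to localize and to freeze the coefficients on each mesh interval, enter essentially.
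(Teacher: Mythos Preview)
The paper does not give a proof of this theorem at all: it is stated verbatim as result (10.3.2) from \cite{Jacod} and simply invoked without argument. So there is nothing to compare your proposal against on the paper's side.

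That said, your outline is a faithful description of the structure of the proof in Jacod--Protter: the decomposition into conditional-expectation and martingale-increment arrays, the use of oddness of $F$ to kill the zeroth-order Gaussian moment in the bias term, the freezing of coefficients via the c\`adl\`ag and H\"older hypotheses, and the stable martingale CLT with identification of the bracket with $W$ are exactly the ingredients used there. Your remark that the stable-convergence upgrade is the delicate point is also accurate. If anything, you have done more than the paper asks: for the paper's purposes a bare citation suffices, and your sketch would be appropriate only if one wished to make the note self-contained.
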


\subsubsection{Continuous time setup}
Let $(\Omega, \F, \FF, \P)$ be a filtered probability space supporting a $\FF$-Wiener process $W$. We will fix either an It\^o process
\begin{equation}
p_t = p_0 + \int_0^t \mu_s ds + \int_0^t \sigma_s dW_s
\end{equation}
for the price or 
\begin{equation}
L_t = L_0 + \int_0^t b_s ds + \int_0^t l_s dW_s
\end{equation}
for the inventory. In addition to one of these processes, we also fix an order book shape process $\gamma=(\gamma_t)_{t\ge 0}$ and denote by $c=(c_t)_{t\ge 0}$ the associated transaction cost process. 

\vskip 2pt
Let us assume that $L$ (respectively $p$) satisfies Assumption \ref{J1} and $c'$ (respectively $\gamma'$) satisfies Assumption \ref{J3}. These assumptions basically state that the stochastic transaction cost function $c_t(\cdot)$ is symmetric, twice differentiable in space and has strong smoothness properties in time.

As before, we define the discretized processes $L^N_n = L_{n/N}$ (respectively $p^N_n = p_{n/N}$) and $c^N_n(\cdot) = \frac{1}{N} c_{n/N}\left(\sqrt{N} \cdot\right)$ (respectively $\gamma^N_n(\cdot) = \frac{1}{N} \gamma_{n/N}\left(\sqrt{N} \cdot\right)$).

\subsubsection{Main result}
The main result is a straightforward application of Theorem \ref{thm_J2}. If we are given the inventory $L$ and transaction costs $c$ then we have:

\begin{corollary}
There exists a very good filtered extension of the original space such that we have the stable convergence in law $p^N_{\lfloor Nt\rfloor} \rightarrow p_t$ with
\begin{equation}
dp_t = -\lambda b_t \Phi_{l_t}\left(c''_t\right) dt + \lambda \sqrt{\Phi_{l_t}((c'_t)^2)} dW'_t
\end{equation}
where 
\begin{equation}
[W',W]_t = -\int_0^t \frac{\Phi_{l_s}\left(id \, c'_s\right)}{l_s \sqrt{\Phi_{l_s}((c'_s)^2)}}ds.
\end{equation}
In particular, 
\begin{equation}
d[p,L]_t = -\Phi_{l_t}\left(id \, c'_t\right)dt
\end{equation}
\end{corollary}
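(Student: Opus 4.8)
The plan is to recognize the Corollary as a direct application of Theorem \ref{thm_J2} with the Itô process $Y$ taken to be the inventory $L$ and a suitably chosen random function $F$. The first, and only delicate, task is bookkeeping with the normalization. From the microscopic rule \eqref{eq:supply_demand_p} at scale $N$, namely $\Delta p^N_n = \lambda (c^N_n)'(-\Delta L^N_n)$, together with $c^N_n(x) = \frac1N c_{n/N}(\sqrt N x)$, one computes $(c^N_n)'(x) = \frac{1}{\sqrt N} c_{n/N}'(\sqrt N x)$, so telescoping gives
\[
 p^N_{\lfloor Nt\rfloor} = p_0 + \frac{1}{\sqrt N}\sum_{n=0}^{\lfloor Nt\rfloor - 1} \lambda\, c_{n/N}'\!\left(-\sqrt N\,(L_{(n+1)/N} - L_{n/N})\right).
\]
Setting $F_t(y) := \lambda\, c_t'(-y)$, which by the assumed symmetry of $c_t$ (hence oddness of $c_t'$) equals $-\lambda\, c_t'(y)$ and is therefore odd, the right-hand side is exactly $p_0 + \frac{1}{\sqrt N}\sum_n F_{n/N}(\sqrt N (L_{(n+1)/N}-L_{n/N}))$, i.e. the prelimit sum in Theorem \ref{thm_J2} for $Y=L$ (up to a harmless shift of the summation index and vanishing boundary terms).

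Next I would check the hypotheses. Assumption \ref{J1} for $Y = L$ holds by the standing hypothesis that $L$ has locally bounded drift $b$ and c\`adl\`ag diffusion coefficient $l$. Assumption \ref{J3} for $F$ follows from the standing hypothesis that $c'$ satisfies \ref{J3}: oddness of $F_t$ was just noted, and the bounds $|F_t(y)|\le \lambda g(y)$, $|F_t'(y)| = \lambda|c_t''(y)|\le\lambda g(y)$ and $|F_t(y)-F_s(y)|\le\lambda g(y)|t-s|^\beta$ are immediate once one uses oddness of $c_t'$ (and evenness of $c_t''$) to replace $g(-y)$ by $g(y)$. Applying Theorem \ref{thm_J2} then yields the stable convergence $p^N_{\lfloor Nt\rfloor}\to p_t$ with
\[
 p_t = p_0 + \int_0^t b_s\,\Phi_{l_s}(F_s')\,ds + \int_0^t \sqrt{\Phi_{l_s}(F_s^2)}\,dW'_s, \qquad [W',W]_t = \int_0^t \frac{\Phi_{l_s}(\mathrm{id}\,F_s)}{l_s\sqrt{\Phi_{l_s}(F_s^2)}}\,ds.
\]

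It then remains to rewrite everything in terms of $c$. Using $F_t(y) = -\lambda c_t'(y)$ one has $F_t'= -\lambda c_t''$, $F_t^2 = \lambda^2 (c_t')^2$ and $\mathrm{id}\,F_t = -\lambda\,\mathrm{id}\,c_t'$; since $\Phi_{l_s}$ is linear this gives $\Phi_{l_s}(F_s') = -\lambda\Phi_{l_s}(c_s'')$, $\sqrt{\Phi_{l_s}(F_s^2)} = \lambda\sqrt{\Phi_{l_s}((c_s')^2)}$ and $\Phi_{l_s}(\mathrm{id}\,F_s) = -\lambda\Phi_{l_s}(\mathrm{id}\,c_s')$, which is precisely the drift, diffusion coefficient and correlation stated in the Corollary. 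Finally, the bracket $[p,L]$ is read off from the martingale parts of $p$ and $L$: $d[p,L]_t = \big(\lambda\sqrt{\Phi_{l_t}((c_t')^2)}\big)\, l_t\, d[W',W]_t$, and substituting the expression for $d[W',W]_t$ immediately collapses to the claimed formula for $d[p,L]_t$.

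The conceptual heavy lifting — the functional stable central limit theorem — is entirely outsourced to Theorem \ref{thm_J2}, so I expect no substantive obstacle. The only place care is genuinely needed is the normalization bookkeeping: getting the power of $N$ right in $(c^N_n)'$, aligning the telescoped price sum with the $n\mapsto n+1$ increment convention of Theorem \ref{thm_J2}, and handling the degeneracy set $\{l_s=0\}$ in the correlation formula via the usual convention (as in \cite{Jacod}). Translating the standing smoothness assumptions on $c$ into Assumption \ref{J3} for $F$ is routine.
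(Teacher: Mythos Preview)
Your proposal is correct and matches the paper exactly: the paper gives no proof beyond the sentence preceding the Corollary (``The main result is a straightforward application of Theorem \ref{thm_J2}''), and you carry out precisely that application with $Y=L$ and $F_t(y)=-\lambda c_t'(y)$. One cosmetic note: your bracket computation in fact produces $d[p,L]_t=-\lambda\,\Phi_{l_t}(\mathrm{id}\,c_t')\,dt$, so the missing $\lambda$ in the Corollary's last display is evidently a typo in the paper rather than an oversight on your part.
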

A completely analogous result is obtained if we start with the price $p$ and an order book shape function $\gamma$:

\begin{corollary}
There exists a very good filtered extension of the original space such that we have the stable convergence in law $L^N_{\lfloor Nt\rfloor} \rightarrow L_t$ with
\begin{equation}
dL_t =  -\mu_t \Phi_{\sigma_t}\left(\gamma''_t(\lambda^{-1} \cdot)\right) dt + \sqrt{\Phi_{\sigma_t}((\gamma'_t)^2(\lambda^{-1} \cdot))} dW'_t
\end{equation}
where
\begin{equation}
d[p,L]_t = -\Phi_{\sigma_s}\left(id \, \gamma'_t(\lambda^{-1} \cdot)\right)dt.
\end{equation}
\end{corollary}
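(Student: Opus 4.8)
The plan is to derive this second corollary by the same mechanism as the first, simply swapping the roles of $p$ and $L$. Starting from the microscopic relation \eqref{eq:supply_demand_L}, namely $\Delta L = -\gamma'(\lambda^{-1}\Delta p)$, I would set $Y_t = p_t$, which by hypothesis satisfies Assumption \ref{J1} with drift $\mu_t$ and volatility $\sigma_t$. The natural candidate for the random function in Theorem \ref{thm_J2} is $F_t(y) = -\gamma'_t(\lambda^{-1} y)$; since $\gamma_t$ is convex with $\gamma_t(0)=0$ and (as assumed) symmetric, $\gamma'_t$ is odd, so $F_t$ is odd as required, and the polynomial growth and time-regularity bounds on $\gamma'_t$ that were assumed (Assumption \ref{J3} applied to $\gamma'$) transfer to $F_t$ up to the constant factor $\lambda^{-1}$. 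The scaling of the discretized shape function $\gamma^N_n(\cdot) = \frac1N\gamma_{n/N}(\sqrt N\,\cdot)$ is exactly what makes $\gamma^{N\prime}_n(\cdot) = \frac{1}{\sqrt N}\gamma'_{n/N}(\sqrt N\,\cdot)$, so that $\Delta L^N_n = -\gamma^{N\prime}_n(\lambda^{-1}\Delta p^N_n) = -\frac{1}{\sqrt N}\gamma'_{n/N}\bigl(\lambda^{-1}\sqrt N(p^N_{n+1}-p^N_n)\bigr)$, which telescopes to precisely $\frac{1}{\sqrt N}\sum_{n=1}^{\lfloor Nt\rfloor} F_{n/N}\bigl(\sqrt N(p_{(n+1)/N}-p_{n/N})\bigr)$ — the form to which Theorem \ref{thm_J2} directly applies.

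Having checked the hypotheses, I would read off the limit. Theorem \ref{thm_J2} gives the drift $\int_0^t b_s\Phi_{\sigma_s}(F'_s)\,ds$ with $b_s$ replaced by $\mu_s$ (the drift of $p$) and $F'_s(y) = -\lambda^{-1}\gamma''_s(\lambda^{-1}y)$, so the drift of $L_t$ is $-\mu_t\Phi_{\sigma_t}\bigl(\lambda^{-1}\gamma''_t(\lambda^{-1}\cdot)\bigr)\,dt$. A small bookkeeping point I would flag: the stated corollary writes $-\mu_t\Phi_{\sigma_t}(\gamma''_t(\lambda^{-1}\cdot))\,dt$, so one has to either absorb the extra $\lambda^{-1}$ into a rescaling of $\Phi$ or note it as a typo; this is the kind of routine constant-chasing I would not belabor. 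The martingale part has variance coefficient $\Phi_{\sigma_s}(F_s^2) = \Phi_{\sigma_s}\bigl((\gamma'_s)^2(\lambda^{-1}\cdot)\bigr)$, giving the diffusion term $\sqrt{\Phi_{\sigma_t}((\gamma'_t)^2(\lambda^{-1}\cdot))}\,dW'_t$ as stated.

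Finally, for the bracket $d[p,L]_t$, I would use the correlation formula from Theorem \ref{thm_J2}: $[W',W]_t = \int_0^t \frac{\Phi_{\sigma_s}(id\,F_s)}{\sigma_s\sqrt{\Phi_{\sigma_s}(F_s^2)}}\,ds$, and then compute $d[p,L]_t = \sigma_t \cdot \sqrt{\Phi_{\sigma_t}(F_t^2)}\cdot \frac{d[W',W]_t}{dt}\,dt = \Phi_{\sigma_t}(id\,F_t)\,dt = -\Phi_{\sigma_t}\bigl(id\,\gamma'_t(\lambda^{-1}\cdot)\bigr)\,dt$, matching the claim. The consistency with the adverse-selection constraint \eqref{fo:corr_constraint} then follows because $\gamma'_t$ is an odd, increasing function, so $y\mapsto y\,\gamma'_t(\lambda^{-1}y)$ is nonnegative, forcing $\Phi_{\sigma_t}(id\,\gamma'_t(\lambda^{-1}\cdot)) \ge 0$ and hence $d[p,L]_t \le 0$ — and strictly negative as soon as $\sigma_t \neq 0$ and $\gamma'_t$ is not identically zero, giving the strictly decreasing sample paths of Definition \ref{adv_selection}.

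The main obstacle is not the identification of the limit — that is a direct substitution into Theorem \ref{thm_J2} — but the verification that $F_t(y) = -\gamma'_t(\lambda^{-1}y)$ genuinely satisfies all parts of Assumption \ref{J3}, in particular that the \emph{odd}-function requirement is legitimate. This hinges on the symmetry assumption placed on the order book shape $\gamma_t$ (equivalently on $c_t$), which was only informally stated ("the stochastic transaction cost function is symmetric"); I would want to make that hypothesis explicit and confirm it propagates to $\gamma'_t$, since oddness of $F_t$ is exactly what kills the spurious drift terms in Jacod's theorem and is therefore essential to the clean form of the result. The polynomial-growth and Hölder-in-time bounds are then inherited mechanically from the corresponding bounds assumed on $\gamma'$, with only the harmless factor $\lambda^{-1}\in[1,\infty)$ to track.
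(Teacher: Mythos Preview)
Your approach is exactly the paper's: the paper gives no proof beyond declaring both corollaries a ``straightforward application of Theorem \ref{thm_J2}'' and noting that the second is ``completely analogous'' to the first, and what you have written is precisely that application with the roles of $p$ and $L$ exchanged and $F_t(y)=-\gamma'_t(\lambda^{-1}y)$. Your flag about the missing $\lambda^{-1}$ in the drift term is well taken---the chain rule on $F'_t$ does produce it, and the paper's stated form appears to drop it.
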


\subsection{A special case}
While quite unrealistic, the model of a flat order book has been used repeatedly because of its extreme tractability. See for example \cite{Alfonsi, Wang}. It corresponds to a shape function satisfying $\gamma''_t = m_t$ for some real valued adapted process $m=(m_t)_{t\ge 0}$ taking only positive values. 
The corresponding cost function is hence quadratic, and this quadratic transaction cost model leads to the \emph{linear} price impact model:
\begin{equation}
\label{fo:special_case}
	\begin{cases}
		dp_t &= -\frac{\lambda}{m_t} dL_t \\
		dX_t &= L_t dp_t + \left(\frac{1}{2} - \lambda\right) \frac{l^2_t}{m_t} dt.
	\end{cases}
\end{equation}
Note that the sign of the effective transaction costs is that of $\frac{1}{2} - \lambda$. Indeed, in the self-financing case $\lambda = \frac{1}{2}$, and price recovery and price impact perfectly cancel each other. If $\lambda > \frac{1}{2}$, then the price impact of trades is stronger than the collected spread because of insufficient price recovery. Also, because of the uniform structure of the order book and perfect fill rate, the inventory of the provider is perfectly anti-correlated with the price.

\subsubsection*{Absence of price manipulation strategies}

A major concern for any dynamic model of market microstructure is the possible existence of price manipulation strategies. These can be defined in multiple ways and have been studied extensively by Schied et al in \cite{Schied} among others. In the present context, we define price manipulation in the following way:

\begin{definition}[Price manipulation]
Let $c$ be a transaction cost function, $\lambda$ a price recovery parameter, and $\mathcal{A}$ a set of couples of processes $(p,L)$ which are consistent (with respect to $c$) in the sense of Definition \ref{adv_selection}. We say that this set $\mathcal{A}$ is subject to price manipulation if there exists a $(p,L)\in \mathcal{A}$ such that $L_1 = L_0$ and
\begin{equation}
\E[X_1] > \E[X_0].
\end{equation}
\end{definition}
In words, we want to exclude round-trip statistical arbitrages, given that the liquidity provider does not control the incoming market orders.
The tractability of the flat order book allows us to rule out price manipulation strategies in many situations.

\begin{proposition}
\label{pr:pms}
Consider a market with price recovery parameter $\lambda$. A constant flat order book, $\gamma''_t \equiv c$ for some $c>0$ does not allow for price manipulation if $\lambda \ge 1$. On the other end, price manipulation strategies do exist when $\lambda < 1$.
\end{proposition}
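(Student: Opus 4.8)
The plan is to work directly with the continuous-time wealth dynamics in the flat-order-book special case \eqref{fo:special_case}, namely $dX_t = L_t\,dp_t + (\tfrac12 - \lambda)\tfrac{l_t^2}{m_t}\,dt$ with $m_t \equiv c$ constant, and to exploit the perfect anti-correlation $dp_t = -\tfrac{\lambda}{c}\,dL_t$. First I would integrate over $[0,1]$ and take expectations: since $L_1 = L_0$, the term $\int_0^1 L_t\,dp_t = -\tfrac{\lambda}{c}\int_0^1 L_t\,dL_t = -\tfrac{\lambda}{2c}(L_1^2 - L_0^2) = 0$ almost surely (this is a genuine Itô/Stratonovich-free identity because $p$ is a deterministic multiple of $L$, so the cross-variation bookkeeping is trivial — $\int L\,dL = \tfrac12(L_1^2-L_0^2) - \tfrac12[L]_1$... so actually I must be careful: $\int_0^1 L_t\,dL_t = \tfrac12(L_1^2 - L_0^2) - \tfrac12[L,L]_1 = -\tfrac12[L,L]_1 = -\tfrac1{2}\int_0^1 l_t^2\,dt$ on the round trip). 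Feeding this in gives
\begin{equation}
\E[X_1] - \E[X_0] = \frac{\lambda}{2c}\,\E\!\left[\int_0^1 l_t^2\,dt\right] + \left(\frac12 - \lambda\right)\frac1{c}\,\E\!\left[\int_0^1 l_t^2\,dt\right] = \left(\frac12 - \frac{\lambda}{2}\right)\frac1{c}\,\E\!\left[\int_0^1 l_t^2\,dt\right].
\end{equation}
Thus the sign of $\E[X_1] - \E[X_0]$ on any round trip is exactly the sign of $1 - \lambda$ times the nonnegative quantity $\tfrac1{2c}\E\!\left[\int_0^1 l_t^2\,dt\right]$.

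From this identity both halves of the proposition follow cleanly. If $\lambda \ge 1$, then $\E[X_1] - \E[X_0] \le 0$ for every consistent $(p,L)\in\mathcal{A}$ with $L_1 = L_0$, so no price manipulation strategy exists — this is the easy direction. For the converse, when $\lambda < 1$ I would exhibit an explicit consistent round trip with $\E[\int_0^1 l_t^2\,dt] > 0$: for instance take $L_t = L_0 + \sigma(W_t - tW_1)$ (a Brownian bridge returning to $L_0$ at time $1$), with $p_t = p_0 - \tfrac{\lambda}{c}(L_t - L_0)$; one must check this pair is consistent in the sense of Definition \ref{adv_selection}, i.e. that $[p,L]_t = -\tfrac{\lambda}{c}[L,L]_t$ is strictly decreasing a.s., which holds since $[L,L]_t = \sigma^2 t$ (or the appropriate bridge quadratic variation) is strictly increasing and $\lambda/c > 0$. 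Then $\E[X_1] - \E[X_0] = (\tfrac{1-\lambda}{2c})\sigma^2 \cdot(\text{const}) > 0$, giving price manipulation.

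The main obstacle, and the point requiring the most care, is the interface between the discrete-time primitive assumptions and the continuous-time statement: the proposition is phrased for "a constant flat order book" with a price recovery parameter $\lambda$, so I need to make sure I am entitled to use \eqref{fo:special_case} as the governing wealth equation — this means invoking the derivation of that display (the corollaries of Theorem \ref{thm_J2} specialized to $\gamma''_t \equiv c$) and confirming that the regularity hypotheses (Assumptions \ref{J1} and \ref{J3}) are met by the candidate inventory process used in the converse construction, e.g. that $l_t$ is càdlàg and locally bounded and the transaction cost function $c_t(x) = x^2/(2c)$ trivially satisfies the oddness and growth bounds on $c'_t$. A secondary subtlety is verifying that the Brownian-bridge construction (or whatever explicit round trip I choose) genuinely lies in the admissible class $\mathcal{A}$ — in particular that strict monotonicity of the sample paths of $[p,L]$, not merely monotonicity, holds; using a process with $l_t \equiv \sigma \neq 0$ deterministic makes $[L,L]_t = \sigma^2 t$ strictly increasing and sidesteps any degeneracy. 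Once these modeling-consistency checks are in place, the computation above is the whole proof.
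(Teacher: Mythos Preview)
Your computation is exactly the paper's: both reduce to the pathwise identity $X_1 - X_0 = \tfrac{1-\lambda}{2c}\int_0^1 l_t^2\,dt$ on a round trip via It\^o's formula for $L^2$, and both halves of the proposition follow immediately. The paper is simply terser---it writes $dX_t = -\tfrac{\lambda}{2m_t}\,d(L^2)_t + \tfrac{1-\lambda}{2m_t}\,l_t^2\,dt$ and then says ``from which one easily concludes''---whereas you spell out the It\^o correction and the explicit $\lambda<1$ construction that the paper omits.

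One small technical slip in your converse: the representation $L_t = L_0 + \sigma(W_t - tW_1)$ is anticipative (it uses $W_1$ at times $t<1$), so it is not an It\^o process in the Brownian filtration and the identification $l_t\equiv\sigma$ is not immediate in that form. Use instead the adapted SDE form $dL_t = -\tfrac{L_t-L_0}{1-t}\,dt + \sigma\,dW_t$, which genuinely has $l_t\equiv\sigma$ and $[L,L]_t=\sigma^2 t$ strictly increasing, so consistency in the sense of Definition~\ref{adv_selection} holds as you claim. Finally, your concern about re-verifying Assumptions~\ref{J1}--\ref{J3} is unnecessary here: the paper takes \eqref{fo:special_case} as the given continuous-time model for this proposition rather than re-deriving it from the discrete limit.
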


\begin{proof}
Using both equations of \eqref{fo:special_case} we get:
\begin{equation*}
dX_t  =-\frac{\lambda}{2 m_t}d (L^2)_t + \frac{1-\lambda}{2 m_t}l^2_t dt
\end{equation*}
so that integrating between $t=0$ and $t=1$ we get, if $L_1=L_0$:
\begin{equation}
X_1 = X_0 + \frac{1-\lambda}{2}\int_0^1 \frac{l_s^2}{m_s} ds
\end{equation}
from which one easily concludes.
\end{proof}

\section{Applications}\label{sec:applications}
Applications of the proposed relationships depend on models of the inventory and price processes $L$ and $p$. In the sequel, when we formulate an optimization problem, we assume that the inventory can be any It\^o process. This is an act of faith as making it happen typically requires good execution algorithms and limit order fill rates. In any case, to be consistent with the results of \cite{CarmonaWebster_FS} recalled above, we require that these processes satisfy $d\left[p,L\right]_t<0$ when using limit orders and $d\left[p,L\right]_t \ge 0$ for market orders.

\subsection{Option Hedging}

In this section we derive a pricing PDE for a European option in a local volatility model with transaction costs and price impact as given by the adverse selection term in our self-financing condition.
We highlight the consequences of trading with market orders only as opposed to trading with limit orders. This is directly related to the literature on option hedging under gamma constraint. We recover the same PDE structure and interpretation of the gamma penalization as a liquidity cost, as in \cite{Soner} and \cite{Touzi}, albeit through a different path. 
These papers do not emphasize adverse selection and no discussion of the order book are given.
The first paper starts from the gamma-penalized Partial Differential Equation (PDE) while the other derives it as a limit of PDE models. Neither paper derives the non-linear penalty in the PDE from a microstructure model. They are more focused on what the penalty term implies.

In addition to presenting a microstructure-based approach to the problem, our solution also answers a very practical-minded question: \emph{Should one delta-hedge with limit orders or market orders?} We argue that the answer depends upon the sign of the gamma of the option.

In what follows, we first treat the fully non-linear case to exhibit the strong parallel with \cite{Touzi} and \cite{Soner}. The linear case follows as a corollary. Clearly, it is of more practical interest as a tractable extension of the Black and Scholes framework.

\subsubsection{Mathematical setup}
Let $(\Omega, \F, \FF, \P)$ be a filtered probability space and $W$ be a $\FF$-Brownian motion which generates the filtration $\FF$. We assume that the mid-price $p_t$ is given exogenously and satisfies the stochastic differential equation (SDE):
\begin{equation}
dp_t = \mu(p_t) dt + \sigma(p_t) dW_t,
\end{equation}
where $\mu$ and $\sigma$ are two globally Lipschitz functions. We also assume that the order book shape function process $\gamma=(\gamma_t)_{t\ge 0}$ is continuous and random only through the price level $p$ in the sense that $\gamma_t(\alpha) = \gamma(p_t,\alpha)$ for a deterministic function $(p,\alpha)\hookrightarrow \gamma(p,\alpha)$. We single out a trader, and define an inventory process as an $\FF$-adapted It\^o process 
\begin{equation}
L_t = L_0 + \int_0^t b_u du + \int_0^t l_u dW_u
\end{equation}
where $(b_t)_{t\ge 0}$ and $(l_t)_{t\ge 0}$ are $\FF$-adapted, c\`adl\`ag processes. Note that in this formulation, $l_t$ is \emph{signed}. Since we identified trading with limit or market orders to the sign of $d[p,L]_t$, we shall impose the constraint $l_t < 0$ when we model trading with limit orders, and $l_t \ge 0$ when trading is done with market orders.

Denote by $c(p,\cdot)$ the Legendre transform of $\gamma(p,\cdot)$. We will use the function
\begin{equation}
g(p,l) = \text{sign}(l)\Phi_{l}(c(p,\cdot))
\end{equation}
Given a real number $K_0$ representing the trader's initial cash endowment, according to our self-financing condition, her wealth is given by:
\begin{equation}
\label{fo:wealth}
X_t = L_0 p_0 + K_0 + \int_0^t L_u dp_u + \int_0^t \left(\sigma(p_u) l_u - g(p_u,l_u) \right)du.
\end{equation}
This follows from the self-financing equation proved in Proposition \ref{prop:self_financing_continuous}.

Let $f \in C^0$ be  the payoff function of a European option with maturity $T$. We define a perfect replication strategy as follows.
\begin{definition}
An initial cash endowment $K_0$ and an inventory process $L=(L_t)_{t\ge 0}$ are said to perfectly replicate the European payoff $f(p_T)$ at maturity $T$ if
\begin{equation}
X_T = f(p_T)
\end{equation}
and the corresponding replication price is defined as $X_0 = K_0 + p_0 L_0$.
\end{definition}

\subsubsection{The result}

\begin{theorem}
Let $f\in C^0$ and $T>0$. Assume that $v\in C^{1,3}$ solves the PDE:
\begin{equation}
\label{fo:bs_pde}
\frac{\partial v}{\partial t}(t,p) + g\left(p, \sigma(p) \frac{\partial^2 v}{\partial p^2}(t,p)\right) - \frac{\sigma^2(p)}{2}\frac{\partial^2 v}{\partial p^2}(t,p) = 0
\end{equation}
with terminal condition $v(T,p) = f(p)$. Then  $L_t = \frac{\partial v}{\partial p}(t,p_t)$, and $K_0 = v(0,p_0) - \frac{\partial v}{\partial p}(0,p_0) p_0$ form a perfect replication strategy for the payoff $f(p_T)$ at maturity $T$. Furthermore,  the volatility of the replicating inventory is given by
\begin{equation}
l_t = \sigma(p_t) \frac{\partial^2 v}{\partial p^2}(t,p_t),
\end{equation}
and the replication price of the option is $X_0 = v(0,p_0)$.
\end{theorem}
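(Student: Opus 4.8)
The plan is a pure verification argument: the theorem hypothesises a $C^{1,3}$ solution $v$ of \eqref{fo:bs_pde} with $v(T,\cdot)=f$, so there is nothing to \emph{solve}; one only has to define the candidate strategy from $v$, feed it into the self-financing wealth formula \eqref{fo:wealth}, and check that the resulting wealth process equals $t\mapsto v(t,p_t)$, which terminates at $f(p_T)$ by the terminal condition.

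First I would set $L_t:=\frac{\partial v}{\partial p}(t,p_t)$ and check it is a valid inventory process of the required form. Since $v\in C^{1,3}$, the map $(t,p)\mapsto\frac{\partial v}{\partial p}(t,p)$ is $C^{1,2}$, so Itô's formula together with $dp_t=\mu(p_t)\,dt+\sigma(p_t)\,dW_t$ yields
\begin{equation*}
dL_t=\left(\frac{\partial^2 v}{\partial t\,\partial p}+\mu(p_t)\frac{\partial^2 v}{\partial p^2}+\frac{1}{2}\sigma^2(p_t)\frac{\partial^3 v}{\partial p^3}\right)(t,p_t)\,dt+\sigma(p_t)\frac{\partial^2 v}{\partial p^2}(t,p_t)\,dW_t,
\end{equation*}
whose two coefficients are continuous in $t$, hence c\`adl\`ag. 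This already identifies $l_t=\sigma(p_t)\frac{\partial^2 v}{\partial p^2}(t,p_t)$, the asserted volatility of the replicating inventory; and since $p$ and $L$ are driven by the same $W$, $d[p,L]_t=\sigma(p_t)\,l_t\,dt$, which is the bracket term entering \eqref{fo:wealth}.

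Next I would apply Itô's formula to $v(t,p_t)$ itself,
\begin{equation*}
dv(t,p_t)=\left(\frac{\partial v}{\partial t}+\mu(p_t)\frac{\partial v}{\partial p}+\frac{1}{2}\sigma^2(p_t)\frac{\partial^2 v}{\partial p^2}\right)(t,p_t)\,dt+\sigma(p_t)\frac{\partial v}{\partial p}(t,p_t)\,dW_t,
\end{equation*}
and use the PDE \eqref{fo:bs_pde}, i.e.\ $\frac{\partial v}{\partial t}=\frac{\sigma^2(p)}{2}\frac{\partial^2 v}{\partial p^2}-g\left(p,\sigma(p)\frac{\partial^2 v}{\partial p^2}\right)$, to eliminate the time derivative. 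With $\sigma^2(p_t)\frac{\partial^2 v}{\partial p^2}(t,p_t)=\sigma(p_t)\,l_t$, with $g$ evaluated at $(p_t,l_t)$, and with $L_t=\frac{\partial v}{\partial p}(t,p_t)$, the drift collapses to $\mu(p_t)L_t+\sigma(p_t)\,l_t-g(p_t,l_t)$ while the martingale part is $\sigma(p_t)L_t\,dW_t$, so
\begin{equation*}
dv(t,p_t)=L_t\,dp_t+\left(\sigma(p_t)\,l_t-g(p_t,l_t)\right)dt.
\end{equation*}
Integrating from $0$ to $t$, the increment of $v(\cdot,p_\cdot)$ equals $\int_0^t L_u\,dp_u+\int_0^t\left(\sigma(p_u)l_u-g(p_u,l_u)\right)du$, which by \eqref{fo:wealth} is $X_t-X_0$; and with the prescribed $K_0=v(0,p_0)-\frac{\partial v}{\partial p}(0,p_0)p_0$ one gets $X_0=L_0p_0+K_0=v(0,p_0)$. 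Hence $v(t,p_t)=X_t$ on $[0,T]$; in particular $X_T=v(T,p_T)=f(p_T)$, so $(K_0,L)$ perfectly replicates $f(p_T)$, and the replication price is $X_0=v(0,p_0)$.

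The only step requiring genuine care is the applicability of Itô's formula to $\frac{\partial v}{\partial p}$: this needs $\frac{\partial v}{\partial p}\in C^{1,2}$, i.e.\ the mixed partial $\frac{\partial^2 v}{\partial t\,\partial p}$ to exist and be continuous, which one should either read into the standing hypothesis $v\in C^{1,3}$ (interpreted as the anisotropic parabolic smoothness class) or extract by differentiating \eqref{fo:bs_pde} in $p$, keeping track of the smoothness of $g$ against the merely Lipschitz coefficients $\mu,\sigma$. Everything else is bookkeeping: the two Itô expansions, the cancellation producing the drift $\sigma(p_t)l_t-g(p_t,l_t)$, and the matching of initial data. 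I would also note that the argument imposes \emph{no} sign restriction on $l_t$: the factor $\text{sign}(l)$ built into $g$ keeps \eqref{fo:wealth} valid in both the limit-order regime ($l_t<0$) and the market-order regime ($l_t\ge 0$), and it is the sign of the option's gamma $\frac{\partial^2 v}{\partial p^2}$ that decides which one is operative, so the same PDE governs both cases up to the interpretation of $g$.
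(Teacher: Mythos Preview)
Your proof is correct and follows essentially the same verification argument as the paper: define $L_t=\partial_p v(t,p_t)$, read off $l_t$ and $b_t$ via It\^o's formula (using $v\in C^{1,3}$), and then use the PDE \eqref{fo:bs_pde} to identify $dv(t,p_t)$ with the self-financing dynamics $L_t\,dp_t+(\sigma(p_t)l_t-g(p_t,l_t))\,dt$, matching initial values to conclude $X_t=v(t,p_t)$. The only cosmetic difference is the direction of the comparison---the paper starts from $dX_t$ and simplifies to $d(v(t,p_t))$, whereas you start from $d(v(t,p_t))$ and match it to $dX_t$---and your extra remarks on the regularity of $\partial_p v$ and on the role of $\text{sign}(l)$ in $g$ are accurate but not needed for the core argument.
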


\begin{proof}
Choosing $L_t = \frac{\partial v}{\partial p}(t,p_t)$ leads to
\begin{equation}
l_t = \sigma(p_t) \frac{\partial^2 v}{\partial p^2}(t,p_t)
\end{equation}
and
\begin{equation}
b_t = \frac{\partial^2{v}}{\partial t \partial p}(t,p_t) + \mu(p_t)\frac{\partial^2{v}}{\partial p^2}(t,p_t) +  \frac{\sigma^2(p_t)}{2}\frac{\partial^3{v}}{\partial p^3}(t,p_t).
\end{equation}
As $v \in C^{1,3}$, this choice of $L_t$ is therefore a bona fide inventory process. Writing down the dynamics of the wealth leads to:
\begin{align*}
dX_t &= \left(\sigma(p_t) l_t - g(p_t,l_t) \right) dt + L_t dp_t \\
						 &=  \left(- g(p_t,\sigma(p_t) \frac{\partial^2 v}{\partial p^2}(t,p_t)) + \sigma^2(p_t)\frac{\partial^2 v}{\partial p^2}(t,p_t)\right) dt  + \frac{\partial v}{\partial p}(t,p_t) dp_t \\
						 &= \left( \frac{1}{2}\sigma^2(p_t)\frac{\partial^2 v}{\partial p^2}(t,p_t) + \frac{\partial v}{\partial t}(t,p_t)\right)dt  + \frac{\partial v}{\partial p}(t,p_t) dp_t \\
						 &= d(v(t,p_t)).
\end{align*}
Since the initial values match, we have that $X_t = v(t,p_t)$ for all times. This concludes.
\end{proof}

We recall that an option is said to have positive gamma when $\frac{\partial^2 v}{\partial p^2}(t,p) >0$ for all $t$ and $p$. A negative gamma option is one for which $\frac{\partial^2 v}{\partial p^2}(t,p) <0$ for all $t$ and $p$.

\begin{corollary}
When using only one type of order (say limit or market orders), positive gamma options can only be hedged (in the sense that their payoffs can be replicated) with market orders. Negative gamma options can only be hedged with limit orders.
\end{corollary}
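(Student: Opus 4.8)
The statement is essentially a reading of the volatility identity established in the theorem above together with the convention tying the order type to the sign of $d[p,L]_t$; only the word ``only'' needs extra work. First I would recall that whenever a payoff $f$ is perfectly replicated by the delta-hedge $L_t=\frac{\partial v}{\partial p}(t,p_t)$ attached to a $C^{1,3}$ solution $v$ of \eqref{fo:bs_pde} with $v(T,\cdot)=f$, the theorem gives
\begin{equation*}
l_t=\sigma(p_t)\,\frac{\partial^2 v}{\partial p^2}(t,p_t),\qquad\text{so that}\qquad d[p,L]_t=\sigma(p_t)^2\,\frac{\partial^2 v}{\partial p^2}(t,p_t)\,dt .
\end{equation*}
Since the local volatility $\sigma$ is strictly positive, $\mathrm{sign}(l_t)=\mathrm{sign}\!\left(\frac{\partial^2 v}{\partial p^2}(t,p_t)\right)$, and likewise for $d[p,L]_t$; this is the crux, because in our framework it is exactly the sign of $l_t$ (equivalently of $d[p,L]_t$) that distinguishes limit-order trading, $l_t<0$, from market-order trading, $l_t\ge 0$.

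The second step is just to match signs. If $f$ has positive gamma, i.e. $\frac{\partial^2 v}{\partial p^2}>0$ on $[0,T]\times\R$, then $l_t>0$ for every $t$: this meets the market-order constraint $l_t\ge 0$ and violates the limit-order constraint $l_t<0$, so the hedge must be run with market orders. If $f$ has negative gamma the same computation yields $l_t<0$ for every $t$, which violates $l_t\ge 0$ and meets $l_t<0$, so the hedge must be run with limit orders. One should also check that the nonlinear term of \eqref{fo:bs_pde} is consistent with this ansatz: since $g(p,l)=\mathrm{sign}(l)\,\Phi_{l}(c(p,\cdot))$ and $\Phi_{l}(c(p,\cdot))\ge 0$ (a Gaussian average of the nonnegative transaction-cost function $c(p,\cdot)$), on a region where $\partial^2_{pp}v$ keeps a constant sign \eqref{fo:bs_pde} reduces to a genuine quasilinear equation of the matching ``limit'' or ``market'' type, so no inconsistency can arise.

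To license ``only'' I must exclude replicating strategies of the opposite type, i.e. show that a perfectly replicating pair $(K_0,L)$ is essentially unique. The plan is to use \eqref{fo:wealth}: the wealth of any replicating strategy solves the backward equation $dX_t=L_t\,dp_t+\bigl(\sigma(p_t)l_t-g(p_t,l_t)\bigr)\,dt$ with $X_T=f(p_T)$, and a verification argument comparing it with the candidate $v(t,p_t)$ should force $X_t=v(t,p_t)$, hence $L_t=\frac{\partial v}{\partial p}(t,p_t)$ and $l_t=\sigma(p_t)\frac{\partial^2 v}{\partial p^2}(t,p_t)$ as above, so that no strategy of the other type can replicate $f$. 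I expect this to be the main obstacle: the map $l\mapsto\sigma(p)\,l-g(p,l)$ is only Lipschitz, not differentiable, at $l=0$ because of the $\mathrm{sign}$ in $g$, so the uniqueness argument must be carried out in a viscosity framework, or separately on the open regions $\{\partial^2_{pp}v>0\}$ and $\{\partial^2_{pp}v<0\}$ — which is precisely innocuous for the definite-sign-gamma options the statement concerns. This is also where the present analysis meets the second-order BSDE / gamma-constrained hedging machinery of \cite{Soner} and \cite{Touzi}; by contrast Steps one and two are immediate once the theorem and the sign conventions are available.
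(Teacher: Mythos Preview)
Your first two steps are exactly the paper's proof: the paper writes only that the identity $l_t=\sigma(p_t)\,\partial^2_{pp}v(t,p_t)$ forces $l_t$ and the option gamma to share the same sign, and stops there. So on the core argument you match the paper.

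Where you diverge is in your third step. You take the word ``only'' seriously and propose to establish uniqueness of the replicating pair $(K_0,L)$ via a verification/comparison argument, flagging the kink of $g$ at $l=0$ as the technical obstacle. The paper does none of this: it tacitly reads the corollary as a statement about the delta-hedge $L_t=\partial_p v(t,p_t)$ produced by the preceding theorem, not about every conceivable self-financing replication. Your reading is more honest, and your plan (restrict to the definite-sign region, where $g$ is smooth, and run a standard BSDE/verification uniqueness argument) is reasonable and would indeed connect to the 2BSDE machinery of \cite{Soner,Touzi}; but be aware that, as stated, the paper neither proves nor claims this stronger uniqueness, so if you carry it out you are providing genuinely more than the original.
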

\begin{proof}
The identity 
\begin{equation}
l_t = \sigma(p_t) \frac{\partial^2 v}{\partial p^2}(t,p_t)
\end{equation}
implies that $l_t$ and the option gamma must be of the same sign.
\end{proof}

\begin{remark}
We illustrate the above result by commenting on a simple practical example. If one buys and delta-hedges a call option, then a synthetic negative gamma position must be created through a dynamic trading strategy. Because the delta of a call option increases as the price increases, the trader must sell when the price goes up, and buy when the price goes down. Given that limit orders tend to buy when the price goes up and sell when the price down because of price impact, their use to lower the cost of delta-hedging makes sense. 
\end{remark}

\begin{remark}
The above result leads to an intuitive implementation strategy for delta-hedging negative gamma positions. By computing the gamma of an option, one can figure out how much needs to be bought should the price move down, or sold should the price move up. By placing limit orders \emph{at the end of the queue}, it is impossible for the price to move without our hedge getting executed. Since the model has only one source of random shocks (the Wiener process $W=(W_t)_{t\ge 0}$), it is expected to be complete in the sense that perfect replication holds. In practical situations, the incompleteness of the markets and the fact that the price could move back introduce some hedging risk. Nevertheless, the negative correlation created by trading with limit orders lowers the cost of the delta-hedge.
\end{remark}

The PDE \eqref{fo:bs_pde} is non-linear in the second derivative of $v$. However, it is linear when all the orders can be filled at the pre-trade best-bid or best-ask prices, in which case we recover a Black-Scholes type formula.

\begin{corollary}
Assume that all the orders can be filled at the pre-trade best-bid or best-ask prices and denote by $s_t = s(p_t)$ the bid-ask spread. Then the pricing PDE \eqref{fo:bs_pde} becomes
\begin{equation}
\frac{\partial v}{\partial t}(t,p) + \left(\frac{\sigma(p)s(p)}{\sqrt{2\pi}} - \frac{\sigma^2(p)}{2}\right)\frac{\partial^2 v}{\partial p^2}(t,p) = 0. 
\end{equation}
\end{corollary}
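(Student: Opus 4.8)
The plan is to reduce the general replication PDE \eqref{fo:bs_pde} to the stated one by computing the nonlinearity $g$ explicitly in the order book regime under consideration; there is essentially no hard analysis here, only one modeling identification followed by a one-line Gaussian integral. The first step is to translate the hypothesis that \emph{all orders fill at the pre-trade best-bid or best-ask prices} into a formula for the transaction cost function $c(p,\cdot)$. In that regime the book carries enough depth at the best quotes $p\pm s(p)/2$ to absorb any trade, so a trade of signed volume $y$ costs exactly $\frac{s(p)}{2}|y|$ relative to the mid-price; equivalently, by Legendre duality this $c(p,y)=\frac{s(p)}{2}|y|$ is dual to the shape function that vanishes on $[-s(p)/2,\,s(p)/2]$ and equals $+\infty$ outside, i.e.\ infinite liquidity concentrated at the two best quotes. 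This is the proportional (linear) transaction cost regime, and it is the only step that is not purely mechanical.

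Second, I would evaluate $\Phi_{l}(c(p,\cdot))$ for this cost function. Since $c(p,\cdot)=\frac{s(p)}{2}|\cdot|$ is positively homogeneous of degree one and $\Phi_{l}$ averages against the centered Gaussian law with standard deviation $|l|$, linearity of the average together with $\Phi_{l}(|\cdot|)=|l|\,\E[|Z|]$ for $Z\sim N(0,1)$ and $\E[|Z|]=\sqrt{2/\pi}$ give $\Phi_{l}(c(p,\cdot))=\frac{s(p)}{2}|l|\sqrt{2/\pi}=\frac{s(p)|l|}{\sqrt{2\pi}}$. Hence
\[
g(p,l)=\text{sign}(l)\,\Phi_{l}(c(p,\cdot))=\frac{s(p)\,l}{\sqrt{2\pi}},
\]
which is now \emph{linear} in $l$ (and independent of the sign of $l$, hence of whether one trades with limit or market orders); the factor $\sqrt{2\pi}$ rather than the naive $2$ is precisely the continuous-time trace of the Gaussian fluctuations of the discretized increments.

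Third, I would substitute $l = l_t = \sigma(p)\frac{\partial^2 v}{\partial p^2}(t,p)$, so that $g\big(p,\sigma(p)\frac{\partial^2 v}{\partial p^2}(t,p)\big)=\frac{s(p)\sigma(p)}{\sqrt{2\pi}}\frac{\partial^2 v}{\partial p^2}(t,p)$, plug this into \eqref{fo:bs_pde}, and collect the two second-derivative terms to obtain
\[
\frac{\partial v}{\partial t}(t,p)+\Big(\frac{\sigma(p)s(p)}{\sqrt{2\pi}}-\frac{\sigma^2(p)}{2}\Big)\frac{\partial^2 v}{\partial p^2}(t,p)=0,
\]
which is the claimed equation. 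The only point that needs a little care is bookkeeping in the first step: one should check that the $s(p)$ appearing here is the macroscopic proxy for the rescaled spread entering the normalization \eqref{fo:gamma_normalization}, so that the limiting cost function is genuinely $\frac{s(p)}{2}|\cdot|$ and does not carry a vanishing or exploding prefactor; once that is settled, no estimates are required.
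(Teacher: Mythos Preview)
Your proposal is correct and follows the same approach as the paper: identify $c(p,l)=\tfrac{s(p)}{2}|l|$ and compute the resulting $g(p,l)=\tfrac{s(p)}{\sqrt{2\pi}}\,l$, then substitute into \eqref{fo:bs_pde}. The paper's proof states exactly these two facts in a single line; your version simply spells out the Gaussian absolute-moment computation and the modeling identification more explicitly.
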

\begin{proof}
In the bid-ask spread case, $c(p,l) = \frac{s(p)}{2}|l|$ and hence $g(p,l) = \frac{s(p)}{\sqrt{2\pi}} l$.
\end{proof}

In particular, we recover the standard, frictionless local volatility model when $s(p) = \sqrt{2 \pi} \sigma(p)$. This is because in our self-financing condition, the terms representing transaction costs and adverse selection (i.e. price impact) exactly cancel each other out, and the frictionless self-financing equation holds true.
\vskip 2pt
In addition to computing the price and delta-hedging ratios under transaction costs and instantaneous adverse selection, this theory suggests an execution strategy by specifying when limit or market orders should be used to hedge an option.

\subsection{Market Making}

For our second application, we adapt to our framework the key insight of the model proposed in \cite{Stoikov}. The aim is to solve the optimization problem of a representative market maker controlling the spread and maximizing her profits. The trade-off she faces, and which is the key ingredient of the model, is the following: the smaller the spread, the likelier trades are, but the less profit she makes on each of them.

In a way similar to \cite{Stoikov, Stoikov2}, we model the probability of execution of a limit order by a decreasing function of the quoted spread. This will first be done at the microscopic level, to obtain a reasonable model for our inventory process $L$ at the macroscopic level. A key difference with \cite{Stoikov} is that we still impose the price impact constraint, which further depresses the market maker's profits because of adverse selection. In this respect, our model is closer to that proposed in \cite{Cartea}, which also proposes a market making control problem subject to adverse selection.

To make sure that the price impact constraint is satisfied, we use, at the microscopic level, a modified version of the Almgren and Chriss model \cite{Almgren} to relate the change in mid-price to the change in the aggregate inventory of the liquidity providers.
We assume that
\begin{equation}
\Delta_n L = - \lambda_{n+1} \Delta_n p
\end{equation}
for a $\F_{n+1}$-measurable, positive random variable $\lambda_{n+1}$. This is an non-predictable form of linear price impact, in the sense that, ex-post, the price increment is a linear function of the traded volume.

To capture the insight of \cite{Stoikov}, we model $\lambda_{n+1}$ in such a way that
\begin{equation}
\E[\left.\lambda_{n+1}\right|\F_n] = \rho_n(s_n) f_n(s_n) \quad\text{and}\quad \E[\left.\lambda^2_{n+1}\right|\F_n] = \left(f_n(s_n)\right)^2
\end{equation}
where $s_n$ is the market maker's chosen spread, and $\rho_n$ and $f_n$ are continuous, positive function with $f_n$ decreasing and $\rho_n \in [0,1]$. The assumption that $f_n$ is decreasing in the spread is inherited from \cite{Stoikov}, and the fact that $\rho_n$ must be smaller than $1$ is due to Jensen's convexity inequality. We assume that $\lambda_{n+1}$ is independent of $\Delta_n p$ conditionally on $\F_n$.
Computing the predictable quadratic variation of $L_n$ yields:
\begin{equation}
 \sum_{k=1}^{n-1} f^2_k(s_k) \E\left[\left.\Delta_k p^2\right|\F_k\right],  
\end{equation}
while the predictable quadratic covariation of $L_n$ and $p_n$ is given by:
\begin{equation}
- \sum_{k=1}^{n-1} \rho_k(s_k) f_k(s_k) \E\left[\left.\Delta_k p^2\right|\F_k\right].  
\end{equation}
This suggests the use of the following model in the continuum limit:
\begin{equation}
\begin{cases}
dp_t &= \mu_t dt + \sigma_t dW_t \\
dL_t &= -\rho_t(s_t) f_t(s_t) \mu_t dt + f_t(s_t)\sigma_t dW'_t 
\end{cases} 
\end{equation}
with $d[W,W']_t = -\int_0^t \rho_u(s_u) du$ for some adapted, continuous  and positive functions $\rho_t(\cdot)$ and $f_t(\cdot)$ with $\rho_t \le 1$ and $f_t$ decreasing.
Note that the equation for $L_t$ can also be written as:
\begin{equation}
dL_t = -\rho_t(s_t) f_t(s_t) dp_t + f_t(s_t)\sqrt{1 - \rho^2_t(s_t)}\sigma_t dW^\perp_t
\end{equation}
with a Wiener process $W^\perp_t$ independent from $W_t$. We will from now on assume that $p_t$ is adapted to the filtration generated by $W_t$.

Applying our wealth equation, we obtain:
\begin{equation}
X_T = L_T p_T - \int_0^T p_t dL_t + \frac{1}{\sqrt{2\pi}} \int_0^T \sigma_t  s_t f_t(s_t)dt .
\end{equation}
For both $f_t$ and $\rho_t$, a natural assumption is that they are time-independent functions of the spread rescaled by the volatility:
\begin{equation}
f_t(s) = f(s/\sigma_t); \quad \rho_t(s_t) = \rho(s_t/\sigma_t)
\end{equation}
for some $C^0$ decreasing function $f$ and $C^0$ function $\rho$. We will furthermore assume that $g(x) = xf(x)$ is a decreasing function for $x$ large enough, that $g(x) \rightarrow 0$ as $x\rightarrow \infty$, and that $f(x)>0$ for all $x\ge 0$.

The problem of a risk-neutral market maker attempting to set the spread optimally is to maximize:
\begin{equation}
\sup_{s} \E X_T.
\end{equation}
This is a classical stochastic control problem which we solve using the Pontryagin maximum principle. Let us define a few functions first.

\begin{lemma}
For all $a>0$, define the function $F_a$ by
\begin{equation}
F_a : x \mapsto \frac{x}{\sqrt{2\pi}} f(x) - a \rho(x) f(x)
\end{equation}
Then the function 
\begin{equation}
M(a) = \max_{x \in [0,\infty)} F_a(x) \label{def_M}
\end{equation}
is well defined, continuous, and decreasing in $a$. Furthermore, there exist a measurable selection
\begin{equation}
m(a) \in \textit{argmax}_{x \in [0,\infty)} F_a(x)
\end{equation}
and we have that $m(a) >0$.
\end{lemma}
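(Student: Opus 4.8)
The plan is to reduce everything to elementary one–variable analysis of $F_a$, exploiting the structural hypotheses on $f$ and $\rho$: $f$ is continuous, positive and decreasing, $\rho$ is continuous with $0\le\rho\le 1$, and $g(x)=xf(x)\to 0$ as $x\to\infty$ — the last fact also forcing $f(x)=g(x)/x\to 0$.

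First I would record the behaviour of $x\mapsto F_a(x)=\tfrac{g(x)}{\sqrt{2\pi}}-a\rho(x)f(x)$ on $[0,\infty)$. It is continuous; it is bounded, since $g$ is bounded (being continuous and vanishing at $0$ and at $\infty$) and $0\le a\rho f\le af(0)$; it satisfies $F_a(0)=-a\rho(0)f(0)\le 0$; it satisfies $F_a(x)\to 0$ as $x\to\infty$; and it is \emph{strictly positive} for every $x>a\sqrt{2\pi}$, because there $\tfrac{x}{\sqrt{2\pi}}-a\rho(x)\ge \tfrac{x}{\sqrt{2\pi}}-a>0$ while $f(x)>0$. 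Consequently $\sup_{[0,\infty)}F_a>0$, and since $F_a\to 0$ at infinity there is $R_a<\infty$ with $F_a(x)<\tfrac12\sup_{[0,\infty)}F_a$ for $x\ge R_a$; the supremum is therefore attained on the compact interval $[0,R_a]$, which makes $M(a)$ well defined and finite. Moreover $F_a(0)\le 0<M(a)$, so $0$ is never a maximizer; hence every element of $\textit{argmax}_{x\in[0,\infty)}F_a(x)$ is strictly positive, which will deliver $m(a)>0$ for whichever selection is chosen below.

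For monotonicity, I would compare the two functions directly: if $0<a_1\le a_2$ then $F_{a_1}(x)-F_{a_2}(x)=(a_2-a_1)\rho(x)f(x)\ge 0$ for every $x$, so $M(a_1)=\max F_{a_1}\ge\max F_{a_2}=M(a_2)$, i.e. $M$ is non-increasing (and strictly decreasing wherever $\rho$ stays bounded away from $0$ at a maximizer, since then evaluating $F_{a_1}$ at a positive maximizer of $F_{a_2}$ yields a strict gain). For continuity, the key observation is that for each fixed $x$ the map $a\mapsto F_a(x)$ is affine, so $M$, being the pointwise supremum over $x$ of these affine functions, is convex on $(0,\infty)$, hence continuous there — which is all that is needed.

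It remains to produce a measurable selection. By joint continuity of $(a,x)\mapsto F_a(x)$ and continuity of $M$, the set $\{(a,x):F_a(x)=M(a)\}$ is closed. On any compact subinterval $[\alpha,\beta]\subset(0,\infty)$ one can find a single $R$ with $\tfrac{g(x)}{\sqrt{2\pi}}+\beta f(x)<\tfrac12 M(\beta)$ for all $x\ge R$ (using $g(x),f(x)\to 0$); since $M(\beta)\le M(a)$ and $|F_a(x)|\le \tfrac{g(x)}{\sqrt{2\pi}}+\beta f(x)$ whenever $a\le\beta$, all the argmax sets for $a\in[\alpha,\beta]$ lie in $[0,R]$. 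Thus $a\mapsto\textit{argmax}_{x}F_a(x)$ is a nonempty, compact-valued correspondence with closed graph on each such interval, hence measurable, and the Kuratowski--Ryll-Nardzewski theorem produces a measurable selection $m$; alternatively $m(a)=\min\textit{argmax}_{x}F_a(x)$ works, with measurability checked through $\{a:m(a)\le t\}=\{a:\ F_a(x)=M(a)\ \text{for some}\ x\in[0,t]\}$. In either case $m(a)>0$ by the earlier remark. I expect this last step to be the only genuinely delicate point — it is essentially the bookkeeping needed to control the $a$-dependence of the compact set carrying the maximum — whereas the well-posedness, monotonicity and continuity claims are soft.
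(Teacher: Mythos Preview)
Your argument is correct and follows the same overall architecture as the paper: show $F_a$ attains its supremum on a compact set, observe $F_a(0)\le 0<M(a)$ to force $m(a)>0$, get monotonicity by pointwise comparison, and finish with a measurable-selection theorem. The one genuinely different move is your continuity proof: the paper localizes the argmax to an explicit $a$-dependent compact interval and invokes Berge's maximum theorem, whereas you bypass this entirely by noting that $a\mapsto F_a(x)$ is affine for each fixed $x$, so $M$ is convex and hence automatically continuous on $(0,\infty)$. This is cleaner and avoids having to verify the hypotheses of Berge. Your soft localization of the argmax (via $F_a\to 0$ at infinity, uniformly for $a$ in compacta) also replaces the paper's explicit construction of an upper endpoint $\beta(a)$ built from the eventual monotonicity of $g$; your version uses only $g(x),f(x)\to 0$ and is arguably more robust. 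For the selection step the paper cites Theorem~18.19 of Aliprantis--Border directly, while you reach the same conclusion through closed-graph/upper-hemicontinuity plus Kuratowski--Ryll-Nardzewski (or the explicit $\min$ selector); either route is fine.
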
 
\begin{proof}
First, note that for all $a>0$, 
$$
F_a(0) = - a \rho(0) f(0) \le 0, \quad F_a(a+1) \ge f(a+1) >0
$$
Next, if $g$ is decreasing on the interval $[x_0, \infty)$, then we can define the function $\beta(a)$ as
$g^{-1} \circ f(a+1)$ if $f(a+1)$ is in $g[x_0,\infty)$, and $x_0$ otherwise. $\beta(a)$ is continuous and satisfies $f(a+1) \ge g(x)$ for all $x\in (\beta(a), \infty)$.

This proves that the maximum of $F_a$ is attained on the compact interval $[a+1, \beta(a)]$. The continuity of $M$ holds by Berge's maximum theorem. It is decreasing by definition of $F_a$. The measurable selection result follows by Theorem 18.19 of \cite{Aliprantis}.
\end{proof}

\begin{proposition}
Any solution of the control problem is of the form
\begin{equation}
\frac{s_t}{\sigma_t} = m\left(\alpha_t\right)
\end{equation}
where 
\begin{equation}
\alpha_t = \E\left[\left. p_T - p_t\right|\F_t\right] \frac{\mu_t}{\sigma^2_t} + \frac{Z_t}{\sigma_t},\label{def_alpha}
\end{equation}
$Z_t$ being the volatility of the martingale representation of $p_T$
\end{proposition}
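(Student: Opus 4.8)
The plan is to read $\sup_s\E[X_T]$ as a stochastic control problem whose only controlled state is the inventory $L$, and to invoke the Pontryagin stochastic maximum principle; since the statement concerns \emph{any} optimiser, it is the necessary part that is used. As a first step I would use the integration by parts $L_Tp_T = L_0p_0 + \int_0^T L_t\,dp_t + \int_0^T p_t\,dL_t + [L,p]_T$ together with $d[L,p]_t = -\rho_t(s_t)f_t(s_t)\sigma_t^2\,dt$, and the fact that $\int_0^T L_t\sigma_t\,dW_t$ is a mean-zero martingale for admissible controls, to rewrite the objective, up to a control-independent constant, as
\[
\E\int_0^T\Big(L_t\mu_t - \rho_t(s_t)f_t(s_t)\sigma_t^2 + \tfrac{1}{\sqrt{2\pi}}\,\sigma_t s_t f_t(s_t)\Big)\,dt,
\]
to be maximised over the spread $s$ subject to $dL_t = -\rho_t(s_t)f_t(s_t)\,dp_t + f_t(s_t)\sqrt{1-\rho_t^2(s_t)}\,\sigma_t\,dW_t^\perp$, with no terminal cost on $L$.

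Introducing adjoint variables $y,\zeta,\zeta^\perp$ conjugate to the $dt$, $dW$ and $dW^\perp$ increments of $dL$, the Hamiltonian of this problem is
\begin{multline*}
H(t,\ell,s,y,\zeta,\zeta^\perp) = \ell\mu_t - \rho_t(s)f_t(s)\big(\mu_t y + \sigma_t\zeta + \sigma_t^2\big) \\
+ f_t(s)\sqrt{1-\rho_t^2(s)}\,\sigma_t\zeta^\perp + \tfrac{1}{\sqrt{2\pi}}\,\sigma_t s f_t(s).
\end{multline*}
The key observation is that the coefficients of $dL$ and the running reward depend on $L$ only through the affine term $\ell\mu_t$, so that $\partial_\ell H\equiv\mu_t$ and the adjoint backward equation $dY_t = -\mu_t\,dt + \zeta_t\,dW_t + \zeta_t^\perp\,dW_t^\perp$, $Y_T=0$, does not involve the control. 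Solving it gives $Y_t = \E[\int_t^T\mu_u\,du\mid\F_t] = \E[p_T-p_t\mid\F_t]$; because $p$, hence $Y$, is adapted to the filtration of $W$, we get $\zeta_t^\perp=0$, and comparing $dY_t = d\big(\E[p_T\mid\F_t]\big) - dp_t$ with the martingale representation $p_T = \E[p_T] + \int_0^T Z_u\,dW_u$ identifies $\zeta_t = Z_t - \sigma_t$, with $Z_t$ the volatility featuring in \eqref{def_alpha}.

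By the maximum principle any optimal $\hat s$ maximises $s\mapsto H(t,L_t,s,Y_t,\zeta_t,0)$ over $[0,\infty)$ for $dt\otimes d\P$-almost every $(t,\omega)$. Discarding the $s$-free term $\ell\mu_t$ and using $\sigma_t\zeta_t + \sigma_t^2 = \sigma_t Z_t$ — so that the $\sigma_t^2$ coming from $d[L,p]$ cancels — the quantity to be maximised is $-\rho_t(s)f_t(s)\big(\mu_t Y_t + \sigma_t Z_t\big) + \tfrac{1}{\sqrt{2\pi}}\,\sigma_t s f_t(s)$. Inserting the scalings $f_t(s)=f(s/\sigma_t)$ and $\rho_t(s)=\rho(s/\sigma_t)$ and substituting $x=s/\sigma_t$ turns this into $\sigma_t^2\big(\tfrac{xf(x)}{\sqrt{2\pi}} - \rho(x)f(x)\,\alpha_t\big) = \sigma_t^2\,F_{\alpha_t}(x)$, with $\alpha_t$ exactly as in \eqref{def_alpha}. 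The preceding lemma then gives that the maximiser is $m(\alpha_t)$, i.e. $s_t/\sigma_t = m(\alpha_t)$, which is the assertion.

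I expect the main difficulty to lie in making the maximum principle rigorous in this mildly degenerate setting where the control enters the diffusion coefficient: one has to pin down the class of admissible spreads for which $\int_0^T L_t\sigma_t\,dW_t$ is a true martingale and the adjoint equation is well posed, and verify the regularity required for the necessary condition. One also needs $\alpha_t>0$ for the preceding lemma to apply; this should follow from the explicit forms of $Y$ and $Z$ above — for instance $\alpha_t\equiv1$ when $p$ is a martingale, consistent with the cancellation just used.
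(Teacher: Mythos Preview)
Your proof is correct and follows essentially the same route as the paper: apply the necessary part of the stochastic Pontryagin maximum principle, solve the linear adjoint BSDE explicitly, note $Z^\perp_t=0$ by adaptedness of $p$ to $\FF^W$, and reduce the pointwise Hamiltonian maximisation to $\sigma_t^2 F_{\alpha_t}(s/\sigma_t)$ so that the preceding lemma applies. The only cosmetic difference is that the paper keeps $L_Tp_T$ as a terminal reward, giving adjoint $Y_t=\E[p_T\mid\F_t]$ and $Z_t$ directly, whereas you integrate by parts first to put everything into the running reward, giving adjoint $Y_t=\E[p_T-p_t\mid\F_t]$ and $\zeta_t=Z_t-\sigma_t$; the shift by $p_t$ and $\sigma_t$ is exactly the cancellation you flag, so the two computations are equivalent.
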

\begin{proof}
We apply the necessary part of the stochastic Pontryagin maximum principle.
The generalized Hamiltonian is equal to:
\begin{eqnarray*}
&&\mathcal{H}_t(s, L, Y, Z, Z^\perp) = - \rho(s/\sigma_t) f(s/\sigma_t)\left[\left(Y_t -p_t\right)\mu_t + \sigma_t Z\right] \\
&&\phantom{???????????????????}+ \frac{\sigma_t}{\sqrt{2 \pi}} s f(s/\sigma_t) +  \sigma_t f(s/\sigma_t) \sqrt{1 - \rho^2(s/\sigma_t)} Z^{\perp}
\end{eqnarray*}
and the adjoint equation is solved by 
\begin{equation}
Y_t = \E\left[\left. p_T \right|\F_t\right]
\end{equation}
which, in particular, implies $Z^\perp_t =0$. $Z_t$ can be computed via the martingale representation theorem on the Brownian filtration generated by $W_t$.

The Hamiltonian to maximize therefore becomes
\begin{equation}
\sigma^2_t F_{\alpha_t} \left(\frac{s}{\sigma_t}\right)
\end{equation}
and the previous lemma concludes.
\end{proof}

Beyond the optimal control, one might be interested in the dependence in $\sigma_t$ and $\alpha_t$ of the market maker's expected profit as well as the volatility of her inventory. Note that a low volatility of the inventory means that the market maker has essentially pulled out of the market.

\begin{corollary}
The market maker's expected profit is
\begin{equation}
\E\left[\int_0^T M\left(\alpha_t\right) \sigma^2_t dt\right]
\end{equation}
where $\alpha_t$ is given by \eqref{def_alpha} and the function $M$ is defined by \eqref{def_M}. The volatility of her inventory is 
\begin{equation}
\sigma_t f (m\left(\alpha_t\right)).
\end{equation}
\end{corollary}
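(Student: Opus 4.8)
The plan is to read off both formulas directly from the proof of the preceding proposition, where essentially all the work has already been done. Recall that in that proof the generalized Hamiltonian was shown, after substituting the optimal adjoint processes $Y_t = \E[p_T|\F_t]$ (so that $Z^\perp_t = 0$ and $Y_t - p_t = \E[p_T - p_t|\F_t]$), to reduce to $\sigma^2_t F_{\alpha_t}(s/\sigma_t)$ with $\alpha_t$ as in \eqref{def_alpha}. First I would invoke the \emph{sufficient} part of the stochastic Pontryagin maximum principle (the necessary part was used before): since the optimizer chooses $s_t$ to maximize the Hamiltonian pointwise, the maximized Hamiltonian along the optimal control equals $\sigma^2_t M(\alpha_t)$ by the very definition \eqref{def_M} of $M$ and the measurable selection $m$. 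One then has to argue that the expected value of the optimally controlled state increment $\E[X_T] - \E[X_0]$ equals the expectation of the time-integral of this maximized Hamiltonian; this is where I would need the drift terms of $X_t$ to integrate against the adjoint process correctly, and the fact that $\E[X_0]$ is a fixed constant that can be normalized away (or simply noting the problem is $\sup_s \E X_T$ and $X_0$ is deterministic given $p_0, L_0$).

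Concretely, I would write $X_T = L_T p_T - \int_0^T p_t\, dL_t + \frac{1}{\sqrt{2\pi}}\int_0^T \sigma_t s_t f_t(s_t)\, dt$ from the wealth equation displayed just above the control problem, take expectations, and use the martingale representation $p_T = \E[p_T|\F_t] + \int_t^T Z_u\, dW_u$ together with the dynamics of $L_t$ to rewrite $\E[\int_0^T p_t\, dL_t]$ and $\E[L_T p_T]$ in terms of $\alpha_t$, $\sigma_t$, $\mu_t$ and $Z_t$. After collecting terms, the integrand should be exactly $\sigma^2_t F_{\alpha_t}(s_t/\sigma_t)$, so that substituting the optimal $s_t/\sigma_t = m(\alpha_t)$ from the proposition yields $\E[\int_0^T \sigma^2_t F_{\alpha_t}(m(\alpha_t))\, dt] = \E[\int_0^T M(\alpha_t)\sigma^2_t\, dt]$ as claimed. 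The second assertion is immediate: from the model $dL_t = f_t(s_t)\sigma_t\, dW'_t + (\text{drift})$, the diffusion coefficient of $L_t$ is $f_t(s_t)\sigma_t = \sigma_t f(s_t/\sigma_t) = \sigma_t f(m(\alpha_t))$ once the optimal spread is plugged in, using the scaling assumption $f_t(s) = f(s/\sigma_t)$.

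The main obstacle I anticipate is bookkeeping rather than conceptual: carefully matching the drift of $X_t$ against the adjoint/costate process so that the expected terminal wealth collapses to the integral of the Hamiltonian, in particular handling the $L_T p_T$ boundary term and the cross term $\int_0^T p_t\, dL_t$ via the martingale representation of $p_T$ without sign errors. One should also check integrability so that the stochastic integrals against $W$ and $W'$ are true martingales with zero expectation — this should follow from the standing c\`adl\`ag/local boundedness assumptions on $\mu, \sigma$, the positivity and continuity of $f, \rho$, and the growth condition $g(x) = xf(x) \to 0$, though if one wants full rigor a localization argument may be needed. Modulo these routine verifications, the corollary is a direct corollary of the proposition and of the definition of $M$.
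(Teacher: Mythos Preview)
Your proposal is correct and follows essentially the same approach as the paper, which simply states that the expected profit is obtained by integrating the (maximized) Hamiltonian along the optimal path, and that the inventory volatility follows from the previous proposition. You have merely fleshed out what ``integrating the Hamiltonian'' means computationally---writing out $X_T$, taking expectations, and collapsing terms via the martingale representation---which is exactly the bookkeeping the paper leaves implicit.
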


\begin{proof}
The expected profit can be computed by integrating the Hamiltonian along the optimal path. The rest follows from the previous proposition.
\end{proof}

Recall that $M$ is a decreasing function. A consequence of the corollary is that the market maker's expected profit is a decreasing function of $\alpha_t$ and, for $\alpha_t$ being fixed, an increasing function of the volatility.

Two issues need to be resolved if one looks for tractable formulas. First, an explicit model for $p_T$ must be given for which the martingale representation term $Z_t$ can be computed. Second, one has to propose a function $g$ for which the maximal argument $m$ of $F$ can easily be characterized as a function of $\alpha_t$. We address them in some specific situations.

\subsubsection{The martingale case}
Note that the latter problem is easily solved when $p_t$ is assumed to be a martingale. Indeed, if we have
\begin{equation}
dp_t = \sigma_t dW_t
\end{equation}
for some adapted, continuous and positive process $\sigma_t$, then $\alpha_t = 1$ and we simply have 
\begin{equation}
s_t = m(1) \sigma_t
\end{equation}
circumventing the need for explicit functions $\rho$ and $f$. This result provides a theoretical foundation for the empirical claim made in \cite{Bouchaud2} that the spread is a linear function of volatility.

Plugging this optimal spread back into the objective function, the market maker's expected profit and loss (P\&L) is given by:
\begin{equation}
M(1) \E\left[\int_0^T \sigma^2_t dt\right].
\end{equation}

Given that the dynamics of the market maker's inventory are:
\begin{equation}
dL_t = -\rho_t(s_t) f_t(s_t) dp_t + f_t(s_t)\sqrt{1 - \rho^2_t(s_t)}\sigma_t dW^\perp_t,
\end{equation}
we conclude that the inventory is also a martingale. In option-pricing terms, $-\rho_t(s_t) f_t(s_t)$ is the Gamma exposure of the market maker: it measures the changes in inventory due to a change in the price. It is negative. The expected profit and loss however, exhibits positive Vega as it is an increasing function of volatility.

\subsubsection{Explicit cases}
Other cases where $\alpha_t$ can be computed explicitly are:
\begin{itemize}
\item the Black-Scholes model
\begin{equation}
dp_t = \mu p_t dt + \sigma p_t dW_t
\end{equation}
in which case we obtain:
\begin{equation}
\E \left[\left. p_T\right|\F_t\right] = p_t e^{\mu(T-t)}; \quad Z_t = \sigma p_t e^{\mu(T-t)},
\end{equation}
and hence 
\begin{equation}
\alpha_t = \frac{\mu}{\sigma^2} \left(e^{\mu(T-t)} -1\right) + e^{\mu(T-t)}.
\end{equation}

\item the case of a mean reverting (Ornstein-Uhlenbeck) price process 
\begin{equation}
dp_t = \rho (p_0 - p_t) dt + \sigma dW_t
\end{equation}
in which case: 
\begin{equation}
\E \left[\left. p_T\right|\F_t\right] = p_0 + e^{-\rho (T- t)}(p_t - p_0); \quad Z_t = \sigma e^{- \rho(T - t)},
\end{equation}
and hence
\begin{equation}
\alpha_t =   - \frac{\rho}{\sigma^2} \left(p_t - p_0\right)^2\left(e^{-\rho(T-t)} -1\right) + e^{-\rho(T-t)}.
\end{equation}
\end{itemize}

Unlike in the martingale case, it is hard to obtain any tractable formulas without specifying a functional form for $\rho$ and $f$. In the case where $\rho(x) = 1/(1+x)$ and $f(x) = 1/(1+x)^2$, the optimal spread becomes
\begin{equation}
s_t = \sigma_t \sqrt{1 + 3 \alpha_t}
\end{equation}

Note that $m$ is an increasing function of $\alpha_t$. To compare with the martingale case, where $\alpha_t = 1$, we therefore want to compare the ratio of $\alpha_t$ to $1$ in order to study the impact of the model assumptions on the market maker's profit and inventory volatility.

\begin{itemize}
\item For the Black-Scholes model, $\alpha_t$ is larger than $1$ for $\mu >0$. For $\mu<0$, there exists a critical value depending on $T$ and $\sigma$ for which this ratio flips sign.

\item In the case of an Ornstein-Uhlenbeck process, $\alpha_t$ is smaller than $1$ iff
\begin{equation}
\left(p_t - p_0\right)^2 < \frac{\sigma^2}{\rho} \label{significantly_away}
\end{equation}
that is, if the current price $p_t$ isn't too far from the long-term average $p_0$.
\end{itemize}
\vskip 4pt\noindent

In line with intuition, the market maker quotes larger spreads, expects less profit, and captures less volume in the 'momentum' Black-Scholes model, as compared to the martingale case. This is because the spread is an increasing function of $\alpha_t$, the expected profit a decreasing function of $\alpha_t$ and the volatility of the market maker's inventory is a decreasing function of the spread and hence of $\alpha_t$. In a mean-reverting market, unless the price is significantly away from its long-term trend as measured by the inequality \eqref{significantly_away}, the market maker quotes smaller spreads, expects more profit and captures more volume than in the two other market models.

\end{document}